\renewcommand*\l@section{\@dottedtocline{1}{1.5em}{2.3em}}
\newtheorem{proposition}{Proposition}
\newtheorem{theorem}{Theorem}
\newtheorem{lemma}{Lemma}
\newtheorem{definition}{Definition}
\begin{document}

\title{The Restricted Inomata-McKinley spinor-plane, homotopic deformations and the Lounesto classification}

\author{D. Beghetto$^{1}$} \email{dbeghetto@feg.unesp.br}
\author{R. J. Bueno Rogerio$^{1,2}$} \email{rodolforogerio@feg.unesp.br}
\author{C. H. Coronado Villalobos$^{3}$} \email{ccoronado@id.uff.br}
\affiliation{ \\$^1$Universidade Estadual Paulista (UNESP)\\Faculdade de Engenharia de Guaratinguet\'a, Departamento de F\'isica e Qu\'imica\\
12516-410, Guaratinguet\'a, SP, Brazil}
\affiliation{$^{2}$Instituto de F\'isica e Qu\'imica, Universidade Federal de Itajub\'a - IFQ/UNIFEI, \\
Av. BPS 1303, CEP 37500-903, Itajub\'a - MG, Brasil.}
\affiliation{ \\$^3$Instituto Nacional de Pesquisas Espaciais (INPE)\\Ci\^encias Espaciais e Atmosf\'ericas\\
12227-010, S\~ao Jos\'e dos Campos, SP, Brazil}


\begin{abstract}
 We define a two-dimensional space called the spinor-plane, where all spinors that can be decomposed in terms of Restricted Inomata-McKinley (RIM) spinors reside, and describe some of its properties. Some interesting results concerning the construction of RIM-decomposable spinors emerge when we look at them by means of their spinor-plane representations. We show that, in particular, this space accomodates a bijective linear map between mass-dimension-one and Dirac spinor fields. As a highlight result, the spinor-plane enables us to construct homotopic equivalence relations, revealing a new point of view that can help to give one more step towards the understanding of the spinor theory. In the end, we develop a simple method that provides the categorization of RIM-decomposable spinors in the Lounesto classification, working by means of spinor-plane coordinates, which avoids the often hard work of analising the bilinear covariant structures one by one.
\end{abstract}
\pacs{02.40.Re, 03.50.-z, 03.70.+k}

\maketitle

\section{Introduction}

 The so called Inomata-McKinley spinors are a particular class of solutions of the non-linear Heisenberg equation \cite{akira}. A subclass of Inomata-McKinley spinors called restricted Inomata-McKinley (RIM) spinors was revealed to be useful in describing neutrino physics \cite{novello}. It is well known that free linear massive (or mass-less) Dirac fields can be represented as a combination of RIM-spinors \cite{novello}. Moreover, it was recently shown \cite{RIM} that such Dirac spinors are necessarily type-1 in the so-called Lounesto classification, and that they are all non exotic spinors, i.e., the spacetime itself needs to have an underlying trivial topology\footnote{In this work, what we call by a manifold with underlying trivial topology is a manifold $M$ that has a trivial fundamental group $\pi_1(M) = 0$. Otherwise, the manifold $M$ will be said to have a non-trivial topology.} in order to enable the very existence of RIM-spinors. Thus, the decomposition in terms of RIM-spinors itself is not allowed in a spacetime with non-trivial topology.

 The Elko eigenspinors of charge conjugation operator, which are mass-dimension-one (MDO) spinors\footnote{We will use ``MDO'' to call these spinor fields.}, compose a new set of spinors with an interesting and complex structure on its own \cite{jcap, 1305}. MDO spinors form a complete set of eigenspinors of the charge conjugation operator, $C$, however, they have dual helicity and can take positive (self-conjugated) and negative (anti-self-conjugated) eigenvalues of $C$, contrasting with the Majorana, which take only the positive value and carry single-helicity. From the physical point of view, such spinors are constructed to be ``invisible" to other particles, once all the couplings with the fields of the Standard Model are not allowed, except for the Higgs boson, thus, becoming a natural candidate to describe dark matter \cite{jcap}.

 The idea of mapping MDO and Dirac spinor fields is not new \cite{map1, map2, Julio}. However, the works developed towards this proposal use MDO as being a type-5 spinor field within Lounesto classification, taking the bilinear covariants associated to this class as fundamental elements in the construction of the mapping. It is well known that MDO fields do not fulfill the requisites to fit in the Lounesto Classification (for more details the reader is referred to \cite{bilineares}), since their dual is defined in a different way than the usually is imposed in such classification. Moreover, MDO fields are governed by a whole non-usual dynamics, carrying a new and different physical content. Then, a true map between Dirac and MDO spinors is, in fact, a map between different spinor spaces. Thus, to transcend the need of using the bilinear structures associated to the spinors would be welcome in such attempt to construct the aforementioned map.

 Topology is a very important field of study not only in Mathematics, but also in many areas of Physics. By means of its methods and concepts, Topology often allows the discovery and a deep understanding of several substantial aspects in condensed matter, cosmology and many other fields. Furthermore, in particular by homotopical tools, interesting connections between only apparently disconnected areas and results are often revealed, which makes this field of study so powerful and interesting.

 In the present work, we construct a space called spinor-plane, which is a two-dimensional space with its elements being every spinor that can be written in terms of RIM-spinors. The study of this space leads to a better understanding of properties and relations between these spinors, as we shall see. The fundamental concept, in Algebraic Topology, of homotopic maps reveals impressive in the study of the spinors in this plane. Moreover, by means of the spinor-plane, we provide a truthful and direct categorization of RIM-decomposable spinors in the so-called Lounesto classification of spinor fields. Also, we show that an easily constructible bijective map between Dirac and MDO spinors is a direct result of the properties of the spinor-plane, dealing only with their decompositions in terms of RIM-spinors.

 This paper is organized as follows: A short elementary review on the Lounesto classification and on RIM-spinors is presented in two separated Subsections in Section \ref{prelude}. The decomposition of MDO spinors in terms of RIM-spinors is made in Section \ref{MDORIM}. In Section \ref{spinorplane} we construct the two-dimensional space of all RIM-decomposable spinors and present some of its properties, with the main results being shown as two \textit{Lemmas}. Strong results relating homotopy and RIM-decomposable spinors are condensed in two Theorems rigorously constructed in Section \ref{homotopic}. In Section \ref{bilinears} we devote our attention to the bilinear covariants of the particular class of RIM-decomposable spinors that has its adjoint defined in the Dirac fashion, with the results presented as two Propositions. In the last Section we conclude.

\section{Elementary review}\label{prelude}
This section is reserved for a small review on the introductory elements that are necessary for the study carried out in the scope of this paper.

\subsection{The Lounesto Classification}\label{subLounesto}

Let $\psi$ be an arbitrary spinor field, belonging to a section of the vector bundle $\mathbf{P}_{Spin^{e}_{1,3}}(\mathcal{M})\times\, _{\rho}\mathbb{C}^4$, where $\rho$ stands for the entire representation space $D^{(1/2,0)}\oplus D^{(0,1/2)}$. The usual bilinear covariants associated to $\psi$ reads 
\begin{eqnarray}
\label{Azao} A & = &\bar{\psi}\psi, \;\mbox{(scalar)}\\  
\label{Bzao}B & = & i\bar{\psi}\gamma_5\psi, \;\mbox{(pseudo-scalar)} \\ 
\mathbf{J}&=&J_\mu \theta^\mu = \bar{\psi} \gamma_\mu \psi \theta^\mu, \;\mbox{(vector)} \\
\mathbf{K}&=& K_\mu \theta^\mu = \bar{\psi} i \gamma_{0123} \gamma_\mu \psi \theta^\mu, \;\mbox{(axial-vector)}\\ 
\label{Szao} \mathbf{S} &=& S_{\mu \nu} \theta^{\mu \nu} = \frac{1}{2} \bar{\psi} i \gamma_{\mu \nu} \psi \theta^\mu \wedge \theta^\nu, \;\mbox{(bi-vector)} 
\end{eqnarray} 
where $\gamma_{0123}:=\gamma_5=\gamma_0\gamma_1\gamma_2\gamma_3$ and $\gamma_{\mu\nu} : = \gamma_{\mu}\gamma_{\nu}$. Denoting by $\eta_{\mu \nu}$ the Minkowski metric, the set $\{\mathbbm{1},\gamma
_{I}\}$ (where $I\in\{\mu, \mu\nu, \mu\nu\rho, {5}\}$ is a composed index) is a basis for the Minkowski spacetime
${\cal{M}}(4,\mathbb{C})$ satisfying  $\gamma_{\mu }\gamma _{\nu
}+\gamma _{\nu }\gamma_{\mu }=2\eta_{\mu \nu }\mathbbm{1}$, and $\bar{\psi}=\psi^{\dagger}\gamma_{0}$ stands for the adjoint spinor with respect to the Dirac dual. Yet, the elements $\{ \theta^\mu \}$ are the dual basis of a given inertial frame $\{ \textbf{e}_\mu \} = \left\{ \frac{\partial}{\partial x^\mu} \right\}$, with $\{x^\mu\}$ being the global spacetime coordinates. Also, we are denoting $\theta^{\mu \nu} := \theta^\mu \wedge \theta^\nu$.

In the Dirac theory, the above bilinear covariants are interpreted respectively  as the  mass of the particle ($\sigma$), the pseudo-scalar ($\omega$) relevant for parity-coupling, the current of probability ($\mathbf{J}$), the direction of the electron spin ($\mathbf{K}$), and the probability density of the intrinsic electromagnetic moment ($\mathbf{S}$) associated to the electron. In general grounds, it is always expected
to associate such bilinear structures to physical observables.

The bilinear forms defined in (\ref{Azao})-(\ref{Szao}) obey the so-called Fierz-Pauli-Kofink (FPK) identities, given by \cite{baylis}
\begin{eqnarray}\label{fpkidentidades}
\label{6}\boldsymbol{J}^2 & = & A^2+B^2, \\
J_{\mu}K_{\nu}-K_{\mu}J_{\nu} & = & -B S_{\mu\nu} - \frac{A}{2}\epsilon_{\mu\nu\alpha\beta}S^{\alpha\beta}, 
\\
J_{\mu}K^{\mu} & = & 0, \\ 
\label{9}\boldsymbol{J}^2 & = & -\boldsymbol{K}^2.
\end{eqnarray}
So, the algebraic constraints presented in \eqref{Azao}-\eqref{Szao} reduce the possibilities of (only) six different spinor classes  (for which $\boldsymbol{J}$ is always non-null), known as \emph{Lounesto Classification} \cite{Lou}:
\begin{enumerate}
  \item $A\neq0$, $\quad B\neq0$;
  \item $A\neq0$, $\quad B=0$;
  \item $A=0$, $\quad B\neq0$;
  \item $A=0=B,$ \hspace{0.5cm} $\textbf{K}\neq0,$ $\quad\textbf{S}\neq0$;
  \item $A=0=B,$ \hspace{0.5cm} $\textbf{K}=0,$ $\quad\textbf{S}\neq0$;
  \item $A=0=B,$ \hspace{0.5cm} $\textbf{K}\neq0,$ $\quad\textbf{S}=0$,
\end{enumerate}
with classes 1, 2 and 3 satisfying $\textbf{K},\textbf{S} \neq 0$. The spinors belonging to the first three classes are called regular spinors while classes 4, 5 and 6 are labelled as singular spinors  \cite{jcap, Julio, FRJR, Benn, Maj}.Spinors describing fermions in field theory are called Dirac spinors, and they may belong to classes 1, 2 or 3, i.e. all Dirac spinors are necessarily regular ones\footnote{But not all regular spinors are necessarily Dirac spinors, as showed in the reference \cite{CR}.}.

As recently was shown in \cite{bilineares}, due to the adjoint structure of the MDO fermions \cite{1305}, it is extremely necessary to deform the usual Clifford algebra in order to ascertain the right observance of the FPK identities, regarding MDO spinor fields.

\subsection{A short overview on the non-linear Heisenberg theory formalism}\label{subHeisenberg}
The non-linear Heisenberg equation of motion is easily obtained by varying the action with respect to the spinor field, constructed by \cite{heisenbergpaper, novello2}
\begin{eqnarray}
\mathcal{L} = \frac{i}{2}\bar{\psi}^{H}\gamma^{\mu}\partial_{\mu}\psi^{H} - \frac{i}{2}\partial_{\mu}\bar{\psi}^{H}\gamma^{\mu}\psi^{H}-s J_{\mu}J^{\mu},
\end{eqnarray}
thus, non-linear Heisenberg equation reads\footnote{The fundamental field equations must be non-linear in order to represent interaction. The masses of the particles should be a consequence of this interaction \cite{heisenbergpaper1}.} \cite{novello}
\begin{eqnarray}\label{eqheisenberg}
i\gamma^{\mu}\partial_{\mu}\psi^{H} - 2s(A+iB\gamma^5)\psi^H=0,
\end{eqnarray}
where $s$ stands for a constant which has dimension of $(length)^2$ and the physical amounts $A$ and $B$ are given in terms of the usual bilinear covariants associated with Heisenberg spinor, given by \eqref{Azao} and \eqref{Bzao}, respectively.
The Heisenberg spinor can be represented by a line in a two-dimensional plane ($\pi$), where each axis is represented by the left-hand and right-hand spinors \cite{novello}. In such a way that, the Heisenberg spinor can be portrayed as the following identity
\begin{eqnarray}\label{heisenbergdecomposto}
\psi^H &=& \psi^{H}_{L}+ \psi^{H}_{R},
\end{eqnarray}
in other words,
\begin{eqnarray}
\psi^H=\frac{1}{2}(\mathbbm{1}+\gamma^5)\psi^H+\frac{1}{2}(\mathbbm{1}-\gamma^5)\psi^H.
\end{eqnarray}

A particular class of solutions of the Heisenberg equation \eqref{eqheisenberg} is given by  
\begin{eqnarray}\label{inomatacond}
\partial_{\mu}\psi = (aJ_{\mu}+bK_{\mu}\gamma^5)\psi,
\end{eqnarray}
with $a$, $b$ $\in \mathbb{C}$ of dimensionality $(length)^2$, $J_{\mu}$ and $K_{\mu}$ are covariant and irrotational currents.
A $\psi$ that satisfies the condition \eqref{inomatacond} also satisfies the Heisenberg equation of motion if $a$ and $b$ are such that $2s=i(a-b)$ \cite{novello} and shall be called as RIM (restricted Inomata-McKinley) spinor. As recently was shown in \cite{RIM} every Dirac spinor written in terms of RIM spinors belongs to the class $1$ within Lounesto Classification. In order that \eqref{inomatacond} be integrable, the constants $a$ and $b$ must obey the constraint $Re(a)=Re(b)$.

Hence, we are able to define $J^2=J_{\mu}J^{\mu}$ and consequently 
\begin{eqnarray}
J_{\mu}=\partial_{\mu}S, 
\end{eqnarray}
where 
\begin{eqnarray}\label{eqS}
S=\frac{1}{(a+\bar{a})}\ln \sqrt{J^2}, 
\end{eqnarray}
represents a scalar, and similarly we can write 
\begin{eqnarray}\label{eqK}
K_{\mu}=\partial_{\mu}R,
\end{eqnarray}
with 
\begin{eqnarray}\label{eqR}
R=\frac{1}{(b-\bar{b})}\ln\bigg(\frac{A-iB}{\sqrt{J^2}}\bigg),
\end{eqnarray}
also being a scalar\footnote{In order to make the notation compact, we define $\sqrt{J^2}\equiv J$}. 
From \eqref{inomatacond}, we obtain for the left-hand and right-hand Heisenberg spinors
\begin{eqnarray}
&&\partial_{\mu}\psi^{H}_{L} = (aJ_{\mu}+bK_{\mu})\psi^H_L, \\
&&\partial_{\mu}\psi^{H}_{R} = (aJ_{\mu}-bK_{\mu})\psi^H_R.
\end{eqnarray}
Thus, to complete the program to be accomplished in the scope of this work, one is able to write an arbitrary spinor field, $\psi$, in terms of a $\pi$-plane decomposed Heisenberg spinor 
\begin{eqnarray}\label{psiarbitrario}
\psi=e^{F}\psi^H_L+e^{G}\psi^H_R,
\end{eqnarray}  
and then, looking towards to write a linear theory in terms of a non-linear theory, one analyses the properties encoded on the functions $F$ and $G$ in order to the spinor \eqref{psiarbitrario} satisfy the Dirac equation. This is the prescription used in reference \cite{novello} to write Dirac spinors in terms of RIM-spinors. We will follow this idea in the next Section in order to also write MDO spinors in terms of RIM-spinors.

\section{Mass-dimension-one fermions and RIM-spinors}\label{MDORIM}
Analogously as developed in \cite{novello}, we analyse the possibility to write a MDO fermionic field \cite{1305} in terms of the non-linear Heisenberg spinors. All the discussion is based on two fundamental equations, the non-linear Heisenberg equation and the \emph{Dirac-like} equation for MDO fermions \cite{jcap}, which reads
\begin{eqnarray}\label{diraclike}
(i\gamma^{\mu}\partial_{\mu}\Xi\pm m\mathbbm{1})\lambda_{h}^{S/A}(\boldsymbol{x})=0, 
\end{eqnarray}
where the subscript $h$ stands for the helicity $h=\lbrace\pm,\mp\rbrace$, the upperindex $S/A$ stands for the self-conjugated and anti-self-conjugated spinors, respectively, under action of the charge conjugation operator ($\mathcal{C}\lambda_{h}^{S}= + \lambda_{h}^{S}$ and $\mathcal{C}\lambda_{h}^{A}= - \lambda_{h}^{A}$), and the operator $\Xi$ in its matricial form is given by \cite{bilineares}
\begin{eqnarray}\label{ximatrix}
\Xi= \left(\begin {array}{cccc} {\frac {ip\sin \theta }{m}}&{\frac {-i \left( E+p\cos\theta 
\right) {{\rm e}^{-i\phi}}}{m}}&0&0\\ \noalign{\medskip}{\frac {i
\left( E-p\cos\theta\right) {{\rm e}^{i\phi}}}{m}}&
{\frac {-ip\sin\theta}{m}}&0&0\\ \noalign{\medskip}0&0
&{\frac {-ip\sin\theta}{m}}&{\frac {-i \left( E-p\cos\theta\right) {{\rm e}^{-i\phi}}}{m}}
\\ \noalign{\medskip}0&0&{\frac {i \left( E+p\cos\theta\right) {{\rm e}^{i\phi}}}{m}}&{\frac {ip\sin\theta}{m}}\end {array} \right),
\end{eqnarray}
where $p=|\boldsymbol{p}|$. Then, we obtain the identity
\begin{eqnarray}
\lambda^{S/A}_h = \frac{1}{2}(\mathbbm{1}+\gamma^5)\lambda^{S/A}_{h}+\frac{1}{2}(\mathbbm{1}-\gamma^5)\lambda^{S/A}_{h}.
\end{eqnarray}
 Expliciting the left- and right-handed components,
\begin{eqnarray}
&&\lambda^{S/A}_{R_h}=\frac{1}{2}(\mathbbm{1}-\gamma_5)\lambda^{S/A}_{h}, \\
&&\lambda^{S/A}_{L_h}=\frac{1}{2}(\mathbbm{1}+\gamma_5)\lambda^{S/A}_{h}.
\end{eqnarray} 
We are now able to initiate the process to reach the decomposition (or representation) of the MDO spinors in terms of RIM-spinors, following the ideas of the Subsection \ref{subHeisenberg}. Firstly, one can write
\begin{eqnarray}
\lambda^{S/A}_{h} = e^{\stackrel{\neg}{F}}\psi_{L_{h^{\prime}}}^{H}+ e^{\stackrel{\neg}{G}}\psi_{R_{h^{\prime}}}^{H},
\end{eqnarray}
and, consequently, for the left- and right-handed components, we obtain
\begin{eqnarray}
&&\lambda^{S/A}_{L_h}= e^{\stackrel{\neg}{F}}\psi_{L_{h^{\prime}}}^{H},\\
&&\lambda^{S/A}_{R_h}= e^{\stackrel{\neg}{G}}\psi_{R_{h^{\prime}}}^{H}. 
\end{eqnarray}
The symbol ``\;$\stackrel{\neg}{}$\;" over $F$ and $G$, although commonly used to represent the dual of $\lambda$, is here simply to denote the functions related to $\lambda$ in the attempt to RIM-decompose such a spinor, and do not have any relation to the dual of the field.

Following the program, the next step is to find the explict form of $\stackrel{\neg}{F}$ and $\stackrel{\neg}{G}$ in order that $\lambda^{S/A}_{h}$ satisfies \eqref{diraclike}. By the same akin reasoning presented in \cite{novello} but now for the MDO spinors, we note that 
\begin{eqnarray}\label{chain}
\partial_{\mu} = \partial_{\mu}S\frac{\partial}{\partial S} + \partial_{\mu}R\frac{\partial}{\partial R}.
\end{eqnarray}
Taking into account the relations in equations \eqref{eqS} and \eqref{eqR}, we are able to write \eqref{chain} in this fashion
\begin{eqnarray}
\partial_{\mu}= J_{\mu}\frac{\partial}{\partial S} + K_{\mu}\frac{\partial}{\partial R},
\end{eqnarray}
therefore, one obtains
\begin{eqnarray}\label{elkomaoesq}
\partial_{\mu} \lambda^{S/A}_{L_h} = \bigg(\frac{\partial \stackrel{\neg}{F}}{\partial S}J_{\mu}+ \frac{\partial \stackrel{\neg}{F}}{\partial R}K_{\mu}\bigg)\lambda^{S/A}_{L_h}+(aJ_{\mu}+bK_{\mu})\lambda^{S/A}_{L_h}, 
\end{eqnarray}
and
\begin{eqnarray}\label{elkomaodir}
\partial_{\mu} \lambda^{S/A}_{R_h} = \bigg(\frac{\partial \stackrel{\neg}{G}}{\partial S}J_{\mu}+ \frac{\partial \stackrel{\neg}{G}}{\partial R}K_{\mu}\bigg)\lambda^{S/A}_{R_h}+(aJ_{\mu}-bK_{\mu})\lambda^{S/A}_{R_h}.
\end{eqnarray}
Taking advantage of the \emph{Dirac-like} equation, we multiply the equations \eqref{elkomaoesq} and \eqref{elkomaodir} by $i\gamma^{\mu}$, then, using the fact that $\Xi^2=\mathbbm{1}$ and $[\Xi,\gamma^{\mu}p_{\mu}]=0$, so we have\footnote{The authors choose to work in abstract only with the $\lambda^{S}_{h}$ spinors since the physical content holds the same for all the other MDO spinors, one differing from the other only by a constant phase.} 
\begin{eqnarray}
i\gamma^{\mu}\partial_{\mu}\lambda^{S}_{h} &=& i(A-iB)\bigg(\frac{\partial \stackrel{\neg}{F}}{\partial S}- \frac{\partial \stackrel{\neg}{F}}{\partial R} +(a-b)\bigg)\lambda^{S}_{R_h}+ i(A+iB)\bigg(\frac{\partial \stackrel{\neg}{G}}{\partial S}- \frac{\partial \stackrel{\neg}{G}}{\partial R} +(a-b)\bigg)\lambda^{S}_{L_h}\nonumber\\
&=&m\Xi\lambda^{S}_{h}.
\end{eqnarray}
Using the relations\footnote{For more informations, please, check the appendix.} \eqref{A5}-\eqref{A8}, one obtains the following set of equations:
\begin{eqnarray}
&&\bigg[(A-iB)\bigg(\frac{\partial \stackrel{\neg}{F}}{\partial S}- \frac{\partial \stackrel{\neg}{F}}{\partial R} +(a-b)\bigg)\mathbbm{1}+im\Xi_1\bigg]\lambda^{S}_{R_h}=0, \\
&&\bigg[(A+iB)\bigg(\frac{\partial \stackrel{\neg}{G}}{\partial S}+ \frac{\partial \stackrel{\neg}{G}}{\partial R} +(a-b)\bigg)\mathbbm{1}+im\Xi_2\bigg]\lambda^{S}_{L_h}=0.
\end{eqnarray}
At this stage, we freely summarized the notation and rewrite \eqref{ximatrix} as it follows
\begin{eqnarray}
\Xi= \left( \begin{array}{cc}
\Xi_1  & 0_{2\times 2} \\ 
0_{2\times 2} & \Xi_2 
\end{array} \right).
\end{eqnarray}
After a bit of straightforward calculation, the solutions for $\stackrel{\neg}{F}(S,R)$ and $\stackrel{\neg}{G}(S,R)$ functions are given by 
\begin{eqnarray}
 \stackrel{\neg}{F}_{\pm}(S,R) & \equiv & -2isR \pm \frac{p\sin{\theta} (A+iB) e^{-2(a+\bar{a})S}}{2(a+\bar{a})},\\
 \stackrel{\neg}{G}_{\pm}(S,R) & \equiv & +2isR \pm \frac{p\sin{\theta} (A-iB) e^{-2(a+\bar{a})S}}{2(a+\bar{a})}.
\end{eqnarray}
Note that 
\begin{equation}
 A+iB = \frac{J^2}{A-iB},
\end{equation}
and from \eqref{eqS}, we have
\begin{equation}
J^{2} = e^{2(a+\bar{a})S}.
\end{equation}
Therefore,
\begin{eqnarray}
 e^{\stackrel{\neg}{F}_{-}} &=& \exp{\left[ -2isR - \frac{1}{2}\frac{p\sin{\theta}}{(a+\bar{a})(A-iB)} \right] },\\
 e^{\stackrel{\neg}{G}_{-}} &=& \exp{\left[ +2isR - \frac{1}{2}\frac{p\sin{\theta}}{(a+\bar{a})(A+iB)} \right] },
\end{eqnarray}
then, with $\vartheta \equiv e^{2isR}$, we have
\begin{eqnarray}
 e^{\stackrel{\neg}{F}_{-}} &=& \frac{1}{\vartheta} \exp{\left[ - \frac{1}{2}\frac{p\sin{\theta}}{(a+\bar{a})(A-iB)} \right]},\\
 e^{\stackrel{\neg}{G}_{-}} &=& \vartheta \exp{\left[ - \frac{1}{2}\frac{p\sin{\theta}}{(a+\bar{a})(A+iB)} \right] }.
\end{eqnarray}
Following an analogue prescription, we can write
\begin{eqnarray}
 e^{\stackrel{\neg}{F}_{\pm}} & = & \frac{1}{\vartheta} \exp{\left[ \pm \frac{1}{2}\frac{p\sin{\theta}}{(a+\bar{a})(A-iB)} \right] },\\
 e^{\stackrel{\neg}{G}_{\pm}} & = & \vartheta \exp{\left[ \pm \frac{1}{2}\frac{p\sin{\theta}}{(a+\bar{a})(A+iB)} \right] }.
\end{eqnarray}
In this manner, we finally write the MDO spinors in terms of RIM-spinors
\begin{equation}
 \lambda_h = \frac{1}{\vartheta} \exp{\left[ \pm \frac{p\sin{\theta}}{2(a+\bar{a})(A-iB)} \right]} \psi^H_{L_h} + \vartheta \exp{\left[ \pm \frac{p\sin{\theta}}{2(a+\bar{a})(A+iB)} \right] }\psi^H_{R_h},
\end{equation}
or, one is able to write the last expression in the fashion (replacing $p$ to $m$)
\begin{eqnarray}\label{elkorim}
\lambda = \bigg(\sqrt{\frac{J}{A-iB}}\bigg)^{\rho} \exp{\left[ \pm \frac{m\sin{\theta}}{4Re(a)(A-iB)} \right]} \psi^H_L + \bigg(\sqrt{\frac{A-iB}{J}}\bigg)^{\rho} \exp{\left[ \pm \frac{m\sin{\theta}}{4Re(a)(A+iB)} \right] }\psi^H_R,
\end{eqnarray}
where we have defined $\rho \equiv \frac{Im(a)-Im(b)}{Im(b)} = \frac{-2s}{\text{Im}(b)}$. Note that we omitted the upper index $S/A$ due to the fact that such spinors differs from a global phase. As a net result we reach that the MDO fields can be freely represented as a combination of RIM-spinors which satisfy the non-linear Heisenberg equation.

\section{Two-dimensional spinor-spaces: the spinor-plane}\label{spinorplane}

We start this Section giving the definition of the spaces in which we will work on.

\begin{definition}\label{def1}
 We denote by $\Pi^H$ the two-dimensional space whose the set $\mathcal{B} = \{ \Psi^H_L, \Psi^H_R \}$ (namely, the left- and right-handed components of the RIM-spinor $\Psi^H$) forms a basis. Analogously, we denote the spaces $\Pi^D$ (with basis $\mathcal{D} = \{\Psi^D_L,\Psi^D_R\}$ being formed by the components of the Dirac-RIM spinor) and $\Pi^M$ (with basis formed by the MDO-RIM components $\mathcal{M} = \{\lambda_L,\lambda_R\}$). These spaces will be called spinor-planes.
\end{definition}

 In order to achieve better organization, let us record that we can write Dirac spinors \cite{novello} $\Psi^D$ and MDO spinors $\lambda$ in the $\Pi^H$ space, via basis $\mathcal{B}$, as

\begin{eqnarray}
 \Psi^D & = & \exp{\left[\frac{iM}{(a+\bar{a})J}\right]} J^{2\sigma} \left(\sqrt{\frac{J}{A-iB}}\Psi^H_L + \sqrt{\frac{A-iB}{J}}\Psi^H_R \right), \\
 \lambda & = & \exp{\left[\frac{\pm m \sin{\theta}}{4\text{Re}(a)(A-iB)}\right]} \left(\sqrt{\frac{J}{A-iB}}\right)^{\rho} \Psi^H_L + \exp{\left[\frac{\pm m \sin{\theta}}{4\text{Re}(a)(A+iB)}\right]} \left(\sqrt{\frac{A-iB}{J}}\right)^{\rho} \Psi^H_R,
\end{eqnarray}

\noindent with $J^{2\sigma} = \exp{\{ \left[2is - \frac{1}{2}(b-\bar{b})\right] S\}} = \exp{\left[ -i\frac{\text{Im}(a)}{2\text{Re}(a)}\ln{J} \right]}$. Now we will set the following notations for these complex numbers, for the sake of clarity:


\begin{eqnarray}
 \alpha & \equiv & \exp{\left[\frac{iM}{(a+\bar{a})J}\right]}, \\
 \beta & \equiv & J^{2\sigma}, \\
 \delta & \equiv & \sqrt{\frac{J}{A-iB}}, \\
 \epsilon & \equiv & \left(\sqrt{\frac{J}{A-iB}}\right)^\rho, \\
 \omega & \equiv & \exp{\left[\frac{\pm m \sin{\theta}}{4\text{Re}(a)(A-iB)}\right]},\\
 \zeta & \equiv & \exp{\left[\frac{\pm m \sin{\theta}}{4\text{Re}(a)(A+iB)}\right]}.
\end{eqnarray}

In this fashion, one can denote the left- and right-handed components of the fields as

\begin{eqnarray}
 \Psi^D_L & = & \alpha \beta \delta \Psi^H_L,\\
 \Psi^D_R & = & \alpha \beta \delta^{-1} \Psi^H_R,\\
 \lambda_L & = & \epsilon \omega \Psi^H_L,\\
 \lambda_R & = & \epsilon^{-1} \zeta \Psi^H_R,
\end{eqnarray}

which leads to

\begin{eqnarray}
 \lambda_L = \chi_1 \Psi^D_L,\\
 \lambda_R = \chi_2 \Psi^D_R,\\
 \Psi^D_L = \chi_1^{-1} \lambda_L,\\
 \Psi^D_R = \chi_2^{-1} \lambda_R,
\end{eqnarray}

\noindent with the coefficients defined as $\chi_1 \equiv \epsilon \omega \delta^{-1} \beta^{-1} \alpha^{-1}$ and $\chi_2 \equiv \epsilon^{-1} \zeta \delta \beta^{-1} \alpha^{-1}$ being obviously invertible. These coefficients and their inverses are the tools that map Dirac-RIM spinors into MDO-RIM spinors and vice-versa. After some straightforward calculations, one achieves an explict form of those complex coefficients as


\begin{eqnarray}
 \chi_1 & = & \left(\sqrt{\frac{J}{A-iB}}\right)^{\rho-1} \exp{ \left\{ \frac{1}{2\text{Re}(a)} \left[ \pm \frac{m \sin{\theta}}{2(A-iB)} -i\left( \text{Im}(a)\ln{J} +\frac{M}{J} \right) \right] \right\}}, \\
 \chi_1^{-1} & = & \left(\sqrt{\frac{A-iB}{J}}\right)^{\rho-1} \exp{ \left\{ \frac{1}{2\text{Re}(a)} \left[ \mp \frac{m \sin{\theta}}{2(A-iB)} +i\left( \text{Im}(a)\ln{J} +\frac{M}{J} \right) \right] \right\}}, \\
 \chi_2 & = & \left(\sqrt{\frac{A-iB}{J}}\right)^{\rho-1} \exp{ \left\{ \frac{1}{2\text{Re}(a)} \left[ \pm \frac{m \sin{\theta}}{2(A+iB)} -i\left( \text{Im}(a)\ln{J} +\frac{M}{J} \right) \right] \right\}}, \\
 \chi_2^{-1} & = & \left(\sqrt{\frac{J}{A-iB}}\right)^{\rho-1} \exp{ \left\{ \frac{1}{2\text{Re}(a)} \left[ \mp \frac{m \sin{\theta}}{2(A+iB)} +i\left( \text{Im}(a)\ln{J} +\frac{M}{J} \right) \right] \right\}}.
\end{eqnarray}

This way, one can obtain

\begin{eqnarray}
 \lambda & = & \frac{1}{2}\left[ \chi_1 (\mathbbm{1}+\gamma^5) + \chi_2(\mathbbm{1}-\gamma^5) \right] \Psi^D,\\
 \Psi^D & = & \frac{1}{2}\left[ \chi_1^{-1} (\mathbbm{1}+\gamma^5) + \chi_2^{-1} (\mathbbm{1}-\gamma^5) \right] \lambda.
\end{eqnarray}

If we define the matrices $M \equiv \frac{1}{2}\left[ \chi_1 (\mathbbm{1}+\gamma^5) + \chi_2(\mathbbm{1}-\gamma^5) \right]$ and $N \equiv \frac{1}{2}\left[ \chi_1^{-1} (\mathbbm{1}+\gamma^5) + \chi_2^{-1} (\mathbbm{1}-\gamma^5) \right]$, it easily verifies that $MN = NM = \mathbbm{1}$, i.e., $N = M^{-1}$. Then, we have just proved the following:

\begin{lemma}\label{DiracMDO}
 Let $\varphi_D \in \Pi^D$ and $\varphi_\lambda \in \Pi^M$. There exists a linear isomorphism $M:\Pi^D \rightarrow \Pi^M$, given by means of a matricial operator $M = \frac{1}{2}\left[ \chi_1 (\mathbb{I}+\gamma^5) + \chi_2(\mathbbm{1}-\gamma^5) \right]$, such that
 
\begin{eqnarray}
 \varphi_\lambda & = & M \varphi_D,\\
 \varphi_D & = & M^{-1} \varphi_\lambda.
\end{eqnarray}
 
\end{lemma}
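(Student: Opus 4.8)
The plan is to show that the matricial operator $M$ is linear and invertible, and that the two displayed relations are nothing more than the component-wise identities already derived in this Section, repackaged in matrix form. Linearity is automatic, since $M$ acts by matrix multiplication on $\mathbb{C}^4$; the real content is the invertibility together with the fact that $M$ carries $\Pi^D$ bijectively onto $\Pi^M$.

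First I would record that $P_{\pm} := \frac{1}{2}(\mathbbm{1}\pm\gamma^5)$ are chiral projectors, i.e. $P_+^2 = P_+$, $P_-^2 = P_-$, $P_+P_- = P_-P_+ = 0$ and $P_+ + P_- = \mathbbm{1}$, all following at once from $(\gamma^5)^2 = \mathbbm{1}$. Since $\chi_1$ and $\chi_2$ are complex scalars (functions of spacetime only through $J$, $A$, $B$ and fixed parameters), they commute with every element of the Clifford algebra, in particular with $\gamma^5$; hence $M = \chi_1 P_+ + \chi_2 P_-$ and the candidate inverse $N = \chi_1^{-1}P_+ + \chi_2^{-1}P_-$ are well defined. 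The reciprocals $\chi_1^{-1}$, $\chi_2^{-1}$ exist because $A\mp iB \neq 0$, a consequence of $|A\mp iB|^2 = A^2 + B^2 = \boldsymbol{J}^2 > 0$ which holds here since RIM-decomposable spinors are regular (type-1), and because the exponential factors never vanish.

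Next I would compute $MN$ directly: using the projector relations and pulling the scalars through, $MN = (\chi_1 P_+ + \chi_2 P_-)(\chi_1^{-1}P_+ + \chi_2^{-1}P_-) = \chi_1\chi_1^{-1}P_+ + \chi_2\chi_2^{-1}P_- = P_+ + P_- = \mathbbm{1}$, the cross terms vanishing because $P_+P_- = P_-P_+ = 0$; the identical computation gives $NM = \mathbbm{1}$, so $N = M^{-1}$. To see that $M$ realizes the claimed correspondence, I would apply it to a generic $\varphi_D = \varphi^D_L + \varphi^D_R \in \Pi^D$ and use $P_+\varphi^D_L = \varphi^D_L$, $P_-\varphi^D_R = \varphi^D_R$, $P_+\varphi^D_R = P_-\varphi^D_L = 0$ to get $M\varphi_D = \chi_1\varphi^D_L + \chi_2\varphi^D_R$, which, by the component identities $\lambda_L = \chi_1\Psi^D_L$ and $\lambda_R = \chi_2\Psi^D_R$ already established, is exactly the element of $\Pi^M$ sharing the same $\Pi^H$-data; the reverse statement $\varphi_D = M^{-1}\varphi_\lambda$ follows in the same way with $\chi_i^{-1}$ in place of $\chi_i$.

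I do not anticipate a genuine obstacle. The only steps demanding any care are (i) confirming that $\chi_1,\chi_2$ really are scalars, hence commute with the chiral projectors, and (ii) checking that they never vanish, so that $M^{-1}$ is globally defined on the spinor-plane — and both reduce to the non-degeneracy $\boldsymbol{J}^2 = A^2 + B^2 > 0$, valid because RIM-decomposable spinors sit in Lounesto class 1. Everything else is the projector bookkeeping above, which is in effect already carried out in the lines preceding the Lemma.
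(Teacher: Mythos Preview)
Your proposal is correct and follows essentially the same route as the paper: the paper builds the component relations $\lambda_{L,R}=\chi_{1,2}\Psi^D_{L,R}$, packages them into $M$ and $N$, and remarks that $MN=NM=\mathbbm{1}$ ``easily verifies''; you simply spell out that verification via the projector algebra $P_\pm=\tfrac12(\mathbbm{1}\pm\gamma^5)$ and add an explicit reason for the invertibility of $\chi_1,\chi_2$. One small terminological slip: the $A,B$ entering $\chi_i$ are the bilinears of the Heisenberg RIM spinor $\Psi^H$ itself, which is type-1, not of a generic ``RIM-decomposable'' spinor (those can be type-1, 2, 3 or 6, as the paper shows later); your conclusion $A\mp iB\neq 0$ is unaffected.
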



\textit{Lemma} \ref{DiracMDO} shows a linear bijective (algebraic) map between special classes of MDO and Dirac fields, when both are decomposable in terms of RIM-spinors.

Note that an analogue procedure can be done between all the other combinations of the spinor-spaces. Thus, using $(v,w)_\mathcal{A}$ as a notation for the coordinates of a given spinor in a basis $\mathcal{A}$ of a spinor-space $\Pi^\mathcal{A}$, for $\mathcal{A} \in \{ \mathcal{B,D,M} \}$, one can represent $\Psi^H$, $\Psi^D$ and $\lambda$ as

\begin{alignat}{3}
 \Psi^H & = (1,1)_\mathcal{B} &{} = {}& (\alpha^{-1} \beta^{-1} \delta^{-1}, \alpha^{-1} \beta^{-1} \delta)_\mathcal{D} &{} = {}& (\epsilon^{-1} \omega^{-1}, \epsilon \zeta^{-1})_\mathcal{M},\\
 \Psi^D & = (\alpha \beta \delta,\alpha \beta \delta^{-1})_\mathcal{B} &{} = {}& (1,1)_\mathcal{D} &{} = {}& (\chi_1^{-1},\chi_2^{-1})_\mathcal{M},\\
 \lambda & = (\epsilon \omega, \epsilon^{-1} \zeta)_\mathcal{B} &{} = {}& (\chi_1, \chi_2)_\mathcal{D} &{} = {}& (1,1)_\mathcal{M}.
\end{alignat}

Precisely, the construction of the (invertible) operators $L:\Pi^H \rightarrow \Pi^D$ and $Q:\Pi^H \rightarrow \Pi^M$ leads to matricial representations given by

\begin{eqnarray}
 \label{matrizL} L & = & \frac{1}{2}\left[ (\alpha \beta \delta) (\mathbbm{1}+\gamma^5) + (\alpha \beta \delta^{-1}) (\mathbbm{1}-\gamma^5) \right],\\
 \label{matrizQ} Q & = & \frac{1}{2}\left[ (\epsilon \omega) (\mathbbm{1}+\gamma^5) + (\omega^{-1} \zeta) (\mathbbm{1}-\gamma^5) \right],
\end{eqnarray}

\noindent such that 

\begin{eqnarray}
 \label{loke1} \Psi^D & = & L \Psi^H,\\
 \label{loke2} \Psi^H & = & L^{-1} \Psi^D,\\
 \label{loke3} \lambda & = & Q \Psi^H,\\
 \label{loke4} \Psi^H & = & Q^{-1} \lambda.
\end{eqnarray}
 
Then, we can state the following:

\begin{lemma}\label{isos}
 Suppose the existence of a spinor-plane $\Pi^S$ with basis formed by left- and right-handed components of a given spinor $\psi = \psi_L + \psi_R$. If $\psi$ can be decomposed in terms of at least one of $\Psi^H$, $\Psi^D$ or $\lambda$ components with both coefficients non vanishing (in other words, the decomposition is invertible), then it can be written in terms of any of those spinors, i.e., $\Pi^S \cong \Pi^H \cong \Pi^D \cong \Pi^M$.
\end{lemma}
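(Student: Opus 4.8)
The plan is to first note that $\Pi^H$, $\Pi^D$ and $\Pi^M$ are already known to be mutually isomorphic from what precedes, and then to reduce an arbitrary RIM-decomposable $\psi$ to one of these by an explicit chirality-graded change of basis. Indeed, Lemma \ref{DiracMDO} furnishes the linear isomorphism $M:\Pi^D\to\Pi^M$, and the operators $L:\Pi^H\to\Pi^D$ and $Q:\Pi^H\to\Pi^M$ of \eqref{matrizL}--\eqref{matrizQ} are invertible because their coefficients are products of nonzero complex numbers (hence nonzero), so these matrices, being diagonal in the chiral basis, have nonzero determinant; the relations \eqref{loke1}--\eqref{loke4} are precisely the statements $\Psi^D=L\Psi^H$, $\Psi^H=L^{-1}\Psi^D$, $\lambda=Q\Psi^H$, $\Psi^H=Q^{-1}\lambda$. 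Hence $\Pi^H\cong\Pi^D\cong\Pi^M$, consistently with $M=QL^{-1}$.

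It then remains to prove $\Pi^S\cong\Pi^H$. Assume first that $\psi$ is decomposable in terms of the RIM components, $\psi=c_L\Psi^H_L+c_R\Psi^H_R$ with $c_L,c_R\neq 0$. Applying the chiral projectors $\frac{1}{2}(\mathbbm{1}\pm\gamma^5)$ to this identity and using that $\Psi^H_L$ and $\Psi^H_R$ are $\gamma^5$-eigenspinors (by the very definitions $\psi^H_L=\frac{1}{2}(\mathbbm{1}+\gamma^5)\psi^H$, $\psi^H_R=\frac{1}{2}(\mathbbm{1}-\gamma^5)\psi^H$ of Section \ref{prelude}), we obtain the chirality-by-chirality identities $\psi_L=c_L\Psi^H_L$ and $\psi_R=c_R\Psi^H_R$; in particular $\{\psi_L,\psi_R\}$ is a genuine basis of the two-dimensional space $\Pi^S$, since the two vectors are nonzero and of opposite chirality. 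Define the matricial operator $T=\frac{1}{2}\big[c_L^{-1}(\mathbbm{1}+\gamma^5)+c_R^{-1}(\mathbbm{1}-\gamma^5)\big]$; it maps $\psi_L\mapsto\Psi^H_L$ and $\psi_R\mapsto\Psi^H_R$, is linear, and is invertible with inverse $\frac{1}{2}\big[c_L(\mathbbm{1}+\gamma^5)+c_R(\mathbbm{1}-\gamma^5)\big]$ exactly because $c_L,c_R\neq 0$. Thus $T:\Pi^S\to\Pi^H$ is a linear isomorphism. If instead $\psi$ is given as decomposable, with both coefficients nonzero, in terms of the $\Psi^D$ or $\lambda$ components, one runs the identical argument to get $\Pi^S\cong\Pi^D$ or $\Pi^S\cong\Pi^M$; alternatively, since $\Psi^D$ and $\lambda$ are themselves RIM-decomposable via $L^{-1}$ and $Q^{-1}$, substituting their $\mathcal{B}$-basis expansions reduces at once to the case just treated.

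Composing these maps and invoking transitivity of $\cong$ yields $\Pi^S\cong\Pi^H\cong\Pi^D\cong\Pi^M$, and because every isomorphism involved is diagonal in the chiral bases, the conclusion can be read off concretely: $\psi_L$ and $\psi_R$, hence $\psi$, are re-expressed as explicit nonzero multiples of the left- and right-handed components of $\Psi^H$, $\Psi^D$ or $\lambda$. I do not expect a genuine obstacle here: all four objects are two-dimensional complex vector spaces, so the real content is merely exhibiting the isomorphisms in the graded form $\frac{1}{2}[\,c_L(\mathbbm{1}+\gamma^5)+c_R(\mathbbm{1}-\gamma^5)\,]$ and observing that invertibility is equivalent to $c_Lc_R\neq0$. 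The one point meriting care is the passage from the single decomposition $\psi=c_L\Psi^H_L+c_R\Psi^H_R$ to the separate identities $\psi_L=c_L\Psi^H_L$, $\psi_R=c_R\Psi^H_R$, which relies on the ``$L$/$R$'' labels of the RIM components coinciding with the $\gamma^5$-eigenspaces defining the splitting $\psi=\psi_L+\psi_R$, together with the hypothesis that both coefficients are nonzero --- precisely what prevents the change of basis from degenerating.
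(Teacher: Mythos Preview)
Your argument is correct and follows essentially the same approach as the paper, which simply declares the result trivial given Lemma~\ref{DiracMDO} and Equations~(\ref{loke1})--(\ref{loke4}). You have merely made explicit the chirality-graded change-of-basis operator $T$ and the invertibility check $c_Lc_R\neq0$ that the paper leaves implicit.
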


\begin{proof}
 It is trivial, using the results of \textit{Lemma} \ref{DiracMDO} and Equations (\ref{loke1} - \ref{loke4}). 
\end{proof}

Note that \textit{Lemma} \ref{DiracMDO} is a corollary of \textit{Lemma} \ref{isos}.

Another fact that is worthwhile to mention is that $M$, $Q$ and $L$ as shown in \textit{Lemma} \ref{DiracMDO} and Equations (\ref{matrizL}) and (\ref{matrizQ}) are all diagonal (as, obviously, their inverses). This is because of the nature of the chirality projector operators, and we can define:

\begin{definition}
 We define $\mathfrak{M}$ as being the space of all matricial operators $R$ such that $\psi = R \varphi$, with $\psi, \varphi$ being spinors that may be decomposed in terms of RIM-spinors. The space $\mathfrak{M}$ has the set of projector operators $\left\{ \frac{1}{2}(\mathbbm{1}+\gamma^5), \frac{1}{2}(\mathbbm{1}-\gamma^5) \right\}$ as basis, working with complex coefficients to form elements of $\mathfrak{M}$, i.e.,
 
\begin{equation}
 \forall R \in \mathfrak{M}, \exists c_1, c_2 \in \mathbb{C} : R = c_1 \frac{1}{2}(\mathbbm{1}+\gamma^5) + c_2 \frac{1}{2}(\mathbbm{1}-\gamma^5).
\end{equation}
 
 Explicitly, $R = \text{diag}(c_2, c_2, c_1, c_1)$.
 
\end{definition}

It should be clear that, when $c_1,c_2 \neq 0$, every $R \in \mathfrak{M}$ is invertible, with $\text{diag}(c_2^{-1}, c_2^{-1}, c_1^{-1}, c_1^{-1}) = R^{-1} \in \mathfrak{M}$.

Finally, given the aspect of all those spinor-planes, we can understand them as being, in fact, exactly the same space, with the matrices $M,L,Q$ and their inverses being change-of-basis matrix operators between the basis $\mathcal{B}$, $\mathcal{D}$ and $\mathcal{M}$, with this being valid for every matrix $R \in \mathfrak{M}$ with other basis of the spinor-plane. This way, we can understand the space $\mathfrak{M}$ as being the space of all change-of-basis matrix operators in the spinor-plane. Then, we have found a two-dimensional space of all spinors that may be decomposed in terms of RIM-spinors (given its left- and right-handed components to form a basis on this space), equipped with a space of change-of-basis matrix operators.

\section{The $s$-space, the spinor-plane and homotopic functions}\label{homotopic}

The reference \cite{novello} analyses carefully the domain of parameters $a$ and $b$, in order to avoid singularities on the potentials $S$ and $R$. Writing the complex numbers $a = a_0 e^{i\phi_1}$ and $b = b_0 e^{i\phi_2}$ in their polar forms, one is able to separate all the possible values for these complex numbers into (only) six disjoint domains:
\begin{eqnarray*}
 \Omega_1 & \equiv & W_1 \otimes Z_1, \\
 \Omega_2 & \equiv & W_4 \otimes Z_1, \\
 \Omega_3 & \equiv & W_4 \otimes Z_4, \\
 \Omega_4 & \equiv & W_2 \otimes Z_2, \\
 \Omega_5 & \equiv & W_3 \otimes Z_2, \\
 \Omega_6 & \equiv & W_3 \otimes Z_3,
\end{eqnarray*}

\noindent in which the intervals are defined as $W_1 = \left(0,\frac{\pi}{2}\right), \; W_2 = \left(\frac{\pi}{2}, \pi\right), \; W_3 = \left(\pi,\frac{3\pi}{2}\right), \; W_4 = \left(\frac{3\pi}{2},2\pi\right)$ for $\phi_1$, with an analogue definition for $Z_1, Z_2, Z_3$ and $Z_4$ as intervals of $\phi_2$.
The point here is that for different choices of $a$ and $b$ in those six domains, one can construct different spinor configurations. Then, in order to make more clear our explanations, we define the s-space:

\begin{definition}
 Let $\Omega \equiv \bigcup_{i=1}^{6} \Omega_i$ be the space of all the feasible choices of parameters $(\phi_1,\phi_2)$ for $a = a(\phi_1)$ and $b = b(\phi_2)$ that define the Heisenberg constant $s = \frac{i(a-b)}{2}$ for the RIM solution (\ref{inomatacond}) of the the Heisenberg equation (\ref{eqheisenberg}). We will call $\Omega$ the s-space.
\end{definition}

Now we are able to introduce another interpretation for the two-dimensional spinor space, as we are dealing in this work: fix a basis on this space, say $\mathcal{B}$, so we are in the ``RIM-copy" of the spinor-plane. In this copy, the spinor $\Psi^H$ is a linear function: accurately, it is the identity function $y_H (x) = x$. Yet in this copy of the plane, we have $\Psi^D$ given by the function $y_D (x) = \left( \frac{A-iB}{J} \right) x$, and $\lambda$ given by $ y_\lambda (x) = \left( \frac{A-iB}{J} \right)^{-2\rho} \omega^{-1} \zeta x$. For both Dirac and MDO cases, we have the variable $x$ being defined via the $s$-space $\Omega$, i.e., $x = x(a(\phi_1),b(\phi_2))$, and also $y = y(a(\phi_1),b(\phi_2))$. But once a pair $(\phi_1, \phi_2) \in \Omega$ is fixed, all coordinates on the spinor-plane for every spinor is a pair $(x,y(x))$ in every basis.

In other words, Dirac, MDO and RIM spinors (depending on which basis we are working on the spinor-plane) are implicit functions of $a$ and $b$ (or, via s-space, of $\phi_1$ and $\phi_2$), i.e., behave like functions of the type

\begin{eqnarray}\label{spinorfunction1}
 \varphi_\mathcal{B}:\Omega & \longrightarrow & \Pi^H \nonumber \\
 (\phi_1,\phi_2) & \mapsto & (f_1,f_2)_\mathcal{B},
\end{eqnarray}

\noindent with $f_1,f_2$ being complex functions of the pair $(\phi_1,\phi_2) \in \Omega$. In a similar way, we can define $\varphi_\mathcal{D}:\Omega \rightarrow \Pi^D$ and $\varphi_\mathcal{M}:\Omega \rightarrow \Pi^M$. Of course, it is also valid for every spinor in the spinor-plane\footnote{This is guaranteed by \textit{Lemma} \ref{isos}.}.

It should be clear that both $\Psi^D$ and $\lambda$ are linear functions (also the identity function) when represented in their ``own copies" of the spinor-plane (i.e., when they are written in terms of the basis $\mathcal{D}$ and $\mathcal{M}$ respectively). In fact, it is true for every possible spinor\footnote{i.e., $\Pi^S \ni \psi = (x,x)_\mathcal{S}$ in the spinor-plane.} $\psi$ as described in \textit{Lemma} \ref{isos}. Following this idea, one can think on the basis change being a deformation of the points (which are functions) on the spinor-plane, leading us to the attempt of construction of a homotopy on this space. Before initiate this, we need first to note that each point on the spinor-plane (in any fixed basis) can be written as $(x,y(x))$, with $y:\mathbb{C} \rightarrow \mathbb{C}$. Notice that every $y = y(x)$ is a function on topological spaces, once $x(\phi_1,\phi_2)$ is set.

For us to begin the construction of the homotopy $H$, let, for instance, the Dirac spinor $\Psi^D$ be represented as $(x,f(x))_\mathcal{D} = (x,g(x))_\mathcal{B}$. Then, we know that $f(x) = x$ and $g(x) = \left( \frac{A-iB}{J} \right) x$. Now we need to find a continous map $H:\mathbb{C} \times [0,1] \rightarrow \mathbb{C}$ such that $H(x,0) = f(x)$ and $H(x,1) = g(x)$ for all $x$. Defining

\begin{equation}
 H(x,t) = (1-t)f(x) + tg(x) = \left[ 1 + t\left (\frac{A-iB}{J} - 1 \right) \right]x,
\end{equation}

\noindent we see that it satisfies the conditions, and a remarkable result comes out: for each fixed value $j \in (0,1)$, the function $H(x,j) \equiv H_j(x)$ induces a new representation for $\Psi^D$ as a pair $(x,H_j(x))_\mathcal{A} \in \Pi^A$, which corresponds to an intermediate copy $\Pi^A$ of the spinor-plane or, equivalently, it gives to the spinor-plane a basis $\mathcal{A} = \{\Psi^A_L,\Psi^A_R\}$ corresponding to the definition of an intermediate spinor $\Psi^A$. Noticing this fact, and remembering the result of \textit{Lemma} \ref{isos}, we can state the following:

\begin{theorem}\label{family}
 Let $f=f(x)$ and $g=g(x)$ be functions such that $(x,f(x))_{\mathcal{A}_0}$ and $(x,g(x))_{\mathcal{A}_1}$ represent the same spinor on the spinor-plane by different basis $\mathcal{A}_0$ and $\mathcal{A}_1$. Then it is possible to construct a homotopy $H(x,t)$ between $f$ and $g$ that defines an infinite family of spinors $\Psi^{A_j}$ (that can be decomposed in terms of RIM-spinors), with each spinor being represented by the identity function if the basis $\mathcal{A}_j = \{ \Psi^{A_j}_L, \Psi^{A_j}_R \}$ is used (or equivalently, each spinor is represented by $(x,x)_{\mathcal{A}_j}$) for each fixed $t = j \in [0,1]$.
\end{theorem}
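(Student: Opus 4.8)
The plan is to exploit the \emph{linear} structure of the spinor-plane built up in Section~\ref{spinorplane}: once an $s$-space parameter $(\phi_1,\phi_2)$ is fixed, the change between any two bases of the plane is a diagonal operator $\mathrm{diag}(c_2,c_2,c_1,c_1)\in\mathfrak{M}$ with $c_1,c_2\in\mathbb{C}^{*}$, so that a spinor written in a basis $\mathcal{A}$ always has coordinates of the form $(x,cx)_{\mathcal{A}}$ for some $c\in\mathbb{C}^{*}$. Hence the hypothesis that $(x,f(x))_{\mathcal{A}_0}$ and $(x,g(x))_{\mathcal{A}_1}$ represent the same spinor already forces $f(x)=c_0x$ and $g(x)=c_1x$, with $c_0,c_1\in\mathbb{C}^{*}$ the ratios of right- to left-handed coordinates of that spinor in $\mathcal{A}_0$ and $\mathcal{A}_1$ respectively. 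Pinning this down is essentially bookkeeping built on \textit{Lemma}~\ref{isos}.

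With $f$ and $g$ linear the homotopy is the obvious straight-line one,
\[
  H(x,t)\;=\;(1-t)f(x)+t\,g(x)\;=\;\bigl[(1-t)c_0+t\,c_1\bigr]\,x ,
\]
which is continuous on $\mathbb{C}\times[0,1]$, restricts to $f$ at $t=0$ and to $g$ at $t=1$, and fixes the origin. I would then read off the intermediate spinors: writing $\mathcal{A}_0=\{u_L,u_R\}$, $c_j:=(1-j)c_0+j\,c_1$ and $H_j:=H(\cdot,j)$, for each $j\in[0,1]$ choose a normalisation $\mu_j\in\mathbb{C}^{*}$ and define, inside the ambient plane $\Pi\cong\Pi^{H}$,
\[
  \Psi^{A_j}_L:=\mu_j\,u_L,\qquad \Psi^{A_j}_R:=\mu_j c_j\,u_R,\qquad \Psi^{A_j}:=\Psi^{A_j}_L+\Psi^{A_j}_R .
\]
By construction $\Psi^{A_j}=(x,H_j(x))_{\mathcal{A}_0}$, whereas in the basis $\mathcal{A}_j:=\{\Psi^{A_j}_L,\Psi^{A_j}_R\}$ it reads $\Psi^{A_j}=(1,1)_{\mathcal{A}_j}$, i.e. it is represented by the identity function $x\mapsto x$. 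Since $\Psi^{A_j}$ has both coordinates non-vanishing in a basis of a plane isomorphic to $\Pi^{H}$, \textit{Lemma}~\ref{isos} guarantees it is RIM-decomposable; and when $c_0\neq c_1$ (the case of genuinely distinct $\mathcal{A}_0$ and $\mathcal{A}_1$) the map $j\mapsto c_j$ is injective, so $\{\Psi^{A_j}\}_{j\in[0,1]}$ is an infinite family, $\Psi^{A_0}$ being proportional to the original spinor.

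The one genuine obstacle is the requirement $c_j\neq 0$ for \emph{every} $j\in[0,1]$: this is precisely what makes each $\mathcal{A}_j$ an honest basis and each $\Psi^{A_j}$ RIM-decomposable through \textit{Lemma}~\ref{isos}, since a vanishing $c_j$ would yield a purely left-handed object whose decomposition is not invertible and hence falls outside that lemma. The segment $t\mapsto(1-t)c_0+t\,c_1$ meets the origin exactly when $c_0/c_1\in\mathbb{R}_{<0}$; in that degenerate situation I would replace the straight-line interpolation by a path from $c_0$ to $c_1$ inside $\mathbb{C}^{*}$ — which exists since $\mathbb{C}^{*}$ is path-connected — for instance $c(t)=c_0\exp\!\bigl[t\,\mathrm{Log}(c_1/c_0)\bigr]$ with a branch of $\mathrm{Log}$ avoiding the ray through $c_1/c_0$, and set $H(x,t)=c(t)\,x$; all remaining steps go through unchanged. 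Apart from this point, the argument is little more than an organised transcription of the discussion preceding the theorem.
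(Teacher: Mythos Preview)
Your proposal is correct and follows essentially the same route as the paper's own construction (given in the paragraph immediately preceding the theorem statement): both use the straight-line homotopy $H(x,t)=(1-t)f(x)+tg(x)$ and interpret each fixed slice $H_j$ as furnishing an intermediate basis $\mathcal{A}_j$ via \textit{Lemma}~\ref{isos}. Your treatment is in fact more careful than the paper's: you make the linear form $f(x)=c_0x$, $g(x)=c_1x$ explicit, spell out the construction of $\Psi^{A_j}$, and identify and repair the degeneracy $c_j=0$ (occurring when $c_0/c_1\in\mathbb{R}_{<0}$) by rerouting through $\mathbb{C}^{*}$ --- a point the paper does not address.
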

 
 It is clear that one could construct all the spinors in the spinor-plane by just simply choosing a pair of complex numbers\footnote{With these numbers depending on the values on the s-space.} $(c_1, c_2)$ and writing down, for instance, $\psi = c_1 \psi^H_L + c_2 \Psi^H_R = (c_1, c_2)_\mathcal{B}$, but using the result of the Theorem \ref{family} one can deform continously the functions-coordinates between two specific spinors, instead of just choose, without any criteria, complex numbers as being the coordinates. In other words, this method provides a family of spinors which is related to each other by functions belonging to the same homotopy class.
 
 Moreover, we can state another result, now concerning on homotopy and spinors on a fixed basis:
 
\begin{theorem}\label{homotopyspinors}
 There exists a homotopic equivalence relation between any two spinors $\psi$ and $\varphi$ that can be written in terms of RIM-spinors.
\end{theorem}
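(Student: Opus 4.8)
The plan is to reduce the statement to two elementary facts: that homotopy of maps is an equivalence relation, and that $\mathbb{C}$ is contractible. First I would fix once and for all a basis of the spinor-plane — say $\mathcal{B}$ — which, by \textit{Lemma} \ref{isos}, serves to represent every RIM-decomposable spinor. In this basis any such spinor $\psi$ is a pair $(c_1^\psi,c_2^\psi)_{\mathcal{B}}$ with $c_1^\psi \neq 0$ (invertibility of the decomposition), hence it is faithfully encoded by its linear coordinate-function $y_\psi:\mathbb{C}\rightarrow\mathbb{C}$, $y_\psi(x) = (c_2^\psi/c_1^\psi)\,x$, which — as remarked just before \textit{Theorem} \ref{family} — is a genuine continuous map of topological spaces once the $s$-space parametrisation $x = x(\phi_1,\phi_2)$ is set. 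I would then \emph{define} the relation $\psi \sim \varphi$ to hold precisely when $y_\psi$ and $y_\varphi$ are homotopic as maps $\mathbb{C}\rightarrow\mathbb{C}$.

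Next I would check that $\sim$ is an equivalence relation, which is the classical three-line verification: reflexivity from the stationary homotopy $H(x,t) = y_\psi(x)$; symmetry from the reparametrisation $\tilde{H}(x,t) = H(x,1-t)$; and transitivity from concatenating two homotopies on $[0,\tfrac{1}{2}]$ and $[\tfrac{1}{2},1]$ and invoking the pasting (gluing) lemma for continuous maps. Since a change of basis in the spinor-plane is, by the structure of $\mathfrak{M}$, nothing but multiplication by a fixed non-zero complex scalar — a homeomorphism of $\mathbb{C}$ — and since pre- and post-composition with homeomorphisms preserves homotopies, the relation does not depend on the particular basis chosen; I would record this so that the definition is canonical.

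Finally I would show that \emph{any} two RIM-decomposable spinors are $\sim$-equivalent, so that $\sim$ has a single class. For this one takes the straight-line homotopy $H(x,t) = (1-t)\,y_\psi(x) + t\,y_\varphi(x)$, which is continuous on $\mathbb{C}\times[0,1]$ and takes values in $\mathbb{C}$ because $\mathbb{C}$ is convex; therefore $y_\psi \simeq y_\varphi$ always. Moreover, by \textit{Theorem} \ref{family}, for each fixed $t = j \in (0,1)$ the function $H_j$ is itself the coordinate-function of a well-defined intermediate spinor $\Psi^{A_j}$ living in the same plane, so the homotopy is not a purely formal device: it deforms $\psi$ continuously into $\varphi$ through a whole family of RIM-decomposable spinors, which is precisely the content one wants to advertise.

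I do not expect a serious technical obstacle: the contractibility of $\mathbb{C}$ makes every required homotopy essentially trivial to exhibit. The only point demanding real care is conceptual bookkeeping — pinning down the topological category in which the coordinate-functions live (so that ``continuous'' is unambiguous, given that $x$ is itself an implicit function of $(\phi_1,\phi_2)\in\Omega$) and verifying basis-independence of $\sim$; once those are settled, the proof is just the standard argument that homotopy is an equivalence relation, sharpened by the observation that the target space is contractible.
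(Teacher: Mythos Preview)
Your proposal is correct and follows essentially the same approach as the paper: fix a basis of the spinor-plane, represent each spinor by its coordinate-function, and exhibit the straight-line homotopy between the two. Your treatment is in fact more thorough than the paper's --- you explicitly verify reflexivity, symmetry, transitivity, and basis-independence, whereas the paper simply writes down $G_\mathcal{A}(x,t) = (x,(1-t)y_\psi + ty_\varphi)$ (with target $\mathbb{C}^2$ rather than your $\mathbb{C}$, a cosmetic difference) and checks the endpoints.
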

 
 \begin{proof}
 Looking at Equation (\ref{spinorfunction1}), what happens is that for every pair $(\phi_1,\phi_2) \in \Omega$ we fix a complex number $x = f_1$, and then another complex number $f_2 = f_2(f_1) = f_2(x)$ is determined to form the pair $(f_1,f_2)$ which represents a spinor in a certain basis of the spinor-plane. Therefore, with the s-space $\Omega$ acting as a support-space to construct the set of all allowed complex numbers $x$, we can understand each spinor $\psi$ itself as a map
 
\begin{eqnarray}\label{spinorfunction2}
 \psi:\mathbb{C} & \longrightarrow & \mathbb{C}^2 \nonumber \\
 x & \mapsto & (x,y(x)).
\end{eqnarray}
 
 Then, we can construct a homotopy $G_\mathcal{A}:\mathbb{C} \times [0,1] \rightarrow \mathbb{C}^2$ between two spinors $\psi = (x,y_\psi)_\mathcal{A}$ and $\varphi = (x,y_\varphi)_\mathcal{A}$ in a fixed basis $\mathcal{A}$, given by
 
 \begin{equation}
  G_\mathcal{A}(x,t) = (x,(1-t)y_\psi + ty_\varphi).
 \end{equation}

 Clearly, in a fixed basis $\mathcal{A}$, we have $G_\mathcal{A}(x,0) = (x,y_\psi) = \psi$ and $G_\mathcal{A}(x,1) = (x,y_\varphi) = \varphi$. Thereupon, $G_\mathcal{A}$ makes explicit an equivalence relation between the spinors $\psi$ and $\varphi$ themselves.
 \end{proof}
 
 Again, we could simply construct ``by hand" spinors in the spinor-plane defining points with the help of s-space $\Omega$, but the comprehensive result of Theorem \ref{homotopyspinors} allows us to obtain representations $(x,y(x))$ of spinors, in a given basis, that are intermediary deformations of two known (homotopic related) spinors, i.e., one can think of equivalence homotopy classes of spinors.
 
 One can remember the well known proposition which states that, if $A$ is a convex subset of $\mathbb{R}^n$ and $X$ is any topological space, then any two continuous maps $f, g: X \rightarrow A$ are homotopic. Thus, the two theorems presented here show a particular (yet remarkable) result, namely, that these $f$ and $g$, with convenient choice of spaces $X$ and $A$, can represent spinor fields.
 
 We will discuss these two Theorems more deeply in the last Section.
 
%
%

\section{On the Dirac dual, bilinear covariants and Lounesto classification}\label{bilinears}

\subsection{RIM-spinors and bilinear covariants}

 It is well known \cite{novello, RIM} that RIM-spinors $\Psi^H$ are necessarily regular spinors, otherwise it would be possible to have $A = 0 = B$ (with $A = \bar{\Psi^H}\Psi^H$ and $B = i \bar{\Psi^H} \gamma^5 \Psi^H$) and then the Heisenberg non-linear equation would reduce to the ordinary linear Dirac equation. Because of that, it seems that the possibility to have only one of the bilinears $A = 0$ or $B = 0$ (i.e., type-2 and type-3 RIM-spinors) is perfectly feasible, since it remains intact the non-linear aspect of the Heisenberg equation. However,
 
\begin{lemma}\label{RIMtype1}
 The RIM-spinors $\Psi^H$ are necessarily type-1 in Lounesto classification (i.e., $A,B \neq 0$).  
\end{lemma}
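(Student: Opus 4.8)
The plan is to exploit the defining RIM integrability condition \eqref{inomatacond} together with the explicit formulas \eqref{eqS}--\eqref{eqR} for the scalar potentials $S$ and $R$, and argue by contradiction. Suppose $\Psi^H$ were type-2 or type-3, i.e. exactly one of $A$ or $B$ vanishes (type-1 being excluded is what we want to prove, and the already-cited results of \cite{novello, RIM} forbid the fully singular case $A=0=B$, since then the Heisenberg equation \eqref{eqheisenberg} collapses to the free Dirac equation). First I would look at \eqref{eqR}: the potential $R$ is built from $\ln\!\big((A-iB)/\sqrt{J^2}\big)$, so its very definition requires $A - iB \neq 0$. This already kills the simultaneous vanishing but does not immediately separate type-2 from type-3, so a finer argument is needed.

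The key step is to track the consequences of $J^2 = A^2 + B^2$ (the FPK identity \eqref{6}) combined with $K_\mu = \partial_\mu R$ and the constraint $\mathrm{Re}(a) = \mathrm{Re}(b)$ needed for \eqref{inomatacond} to be integrable. If $B \equiv 0$ (type-2), then $J^2 = A^2$, and \eqref{eqR} forces $R$ to be (up to the constant $1/(b-\bar b)$) $\ln(A/|A|)$, a pure phase whose gradient should then be consistent with $K_\mu = \partial_\mu R$ and with the axial-vector structure; I would show that this makes $K_\mu$ degenerate in a way incompatible with $\mathbf K \neq 0$, which is mandatory for a regular (type 1,2,3) spinor, and hence contradicts regularity — or, alternatively, that it forces $b = \bar b$, i.e. $\mathrm{Im}(b) = 0$, which via $2s = i(a-b)$ and $\mathrm{Re}(a)=\mathrm{Re}(b)$ drives $s$ to a value collapsing the nonlinearity. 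The symmetric case $A \equiv 0$ (type-3) is handled the same way using \eqref{eqS}: then $J^2 = B^2$ and $S = \frac{1}{a+\bar a}\ln|B|$, and one examines whether $J_\mu = \partial_\mu S$ can remain a genuine non-null vector; pushing this through the FPK relation $J_\mu K^\mu = 0$ and $\mathbf J^2 = -\mathbf K^2$ should again produce a contradiction with either the non-linearity of \eqref{eqheisenberg} or with the reality constraints on $a,b$.

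The cleanest route, which I would try first, is purely algebraic: combine $A + iB = J^2/(A-iB)$ (derived in the excerpt) with $J^2 = e^{2(a+\bar a)S}$ and the explicit decomposition \eqref{psiarbitrario}; writing $A \pm iB$ in polar form, vanishing of $A$ or of $B$ forces the phase of $A - iB$ to be locked to $0$ or $\pi/2$, so $R$ in \eqref{eqR} becomes forced and $K_\mu = \partial_\mu R$ inherits a rigidity that, when fed back into the Heisenberg equation $i\gamma^\mu\partial_\mu\Psi^H = 2s(A + iB\gamma^5)\Psi^H$, either kills the source term or reproduces the linear Dirac dynamics — contradicting that $\Psi^H$ solves a genuinely non-linear equation. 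The main obstacle I anticipate is making precise the step "$K_\mu$ becomes degenerate $\Rightarrow$ contradiction": one must be careful that $K_\mu$ being the gradient of a locally constant-modulus function does not by itself force $\mathbf K = 0$ pointwise, so the argument likely has to invoke the full FPK system (\ref{6})--(\ref{9}) simultaneously rather than any single identity, and perhaps the irrotational/covariant nature of $J_\mu$ and $K_\mu$ assumed in \eqref{inomatacond}. If the slick algebraic argument stalls there, the fallback is the direct route through \eqref{eqR}: $A - iB$ must be non-vanishing and, for $R$ to be a well-defined real scalar of the right dimensionality, its argument must actually vary, which fails precisely when $A=0$ or $B=0$.
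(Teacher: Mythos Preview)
Your $R$-based fallback would actually work cleanly, and the obstacle you anticipate is not real. If $B \equiv 0$ with $A \neq 0$, then $J = |A|$ and $(A-iB)/J = A/|A| = \pm 1$ is locally constant; hence $R$ itself is locally constant and $K_\mu = \partial_\mu R = 0$ identically, contradicting the regularity requirement $\mathbf{K} \neq 0$. The case $A \equiv 0$, $B \neq 0$ gives $(A-iB)/J = \mp i$, again locally constant, with the same conclusion. So the argument really does force $\mathbf{K} = 0$ pointwise --- there is no ``constant-modulus'' subtlety to navigate, and you do not need the full FPK system for this step.

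The paper, however, takes a more direct route: it invokes the differential identities (derived in \cite{novello} from the RIM condition \eqref{inomatacond})
\begin{align*}
\partial_\mu A &= (a+\bar a)\, A\, J_\mu + i(b-\bar b)\, B\, K_\mu,\\
\partial_\mu B &= (a+\bar a)\, B\, J_\mu + i(b-\bar b)\, A\, K_\mu,
\end{align*}
and reads off the contradiction in one line: if $A \equiv 0$ but $B \neq 0$, the first identity forces $K_\mu = 0$ (since $b \neq \bar b$ is needed for $R$ in \eqref{eqR} to be finite); symmetrically for $B \equiv 0$. Your route via $R$ is really the integrated form of the same observation --- differentiating $R$ is exactly what generates the $K_\mu$-piece in these identities --- so the two arguments are close cousins. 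What the paper's version buys is that it bypasses the logarithm entirely and makes the role of $b \neq \bar b$ explicit. Your other proposed routes (polar form of $A \pm iB$, collapsing the Heisenberg equation back to linear Dirac, invoking all of \eqref{6}--\eqref{9}) are unnecessary detours; the proof is shorter than you feared.
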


\begin{proof}
 Firstly, notice that $R = \left( b - \bar{b} \right)^{-1} \ln{ \left( \frac{A-iB}{J} \right) }$, which means, in particular, that $(b - \bar{b}) \neq 0$. In fact, by \eqref{eqK}, otherwise, we would end up with $K_{\mu}\rightarrow \infty$, leading to an unphysical result. Now, in reference \cite{novello}, it is claimed that $J_\mu$ and $K_\mu$ constitute a basis for vectors constructed by the derivative $\partial_\mu$ operating on functionals of $\Psi^H$, and one has the following equations (which are valid for every $\mu$):
 
 \begin{eqnarray}
  \label{del-A} \partial_\mu A  & = & (a + \bar{a}) A J_\mu + i(b - \bar{b}) B K_\mu, \\
  \label{del-B} \partial_\mu B & = & (a + \bar{a}) B J_\mu + i(b - \bar{b}) A K_\mu.
 \end{eqnarray}

 Suppose that $A = 0$ and $B \neq 0$. Then, Equation (\ref{del-A}) gives $0 = i(b - \bar{b}) B K_\mu$, which is a contradiction, since $\Psi^H$ is a regular spinor and we cannot have $K_\mu = 0, \; \forall \mu$. Thus, $\Psi^H$ cannot be type-3. Moreover, if we suppose that $A \neq 0$ and $B = 0$, then Equation (\ref{del-B}) provides $0 = i(b - \bar{b}) A K_\mu$, an analogue contradiction, and we conclude that $\Psi^H$ cannot be type-2, by the same reason as before. Therefore, we conclude that $A,B \neq 0$ and $\Psi^H$ is a regular type-1 spinor.
\end{proof}
 
 We can extract more informations about $A$ and $B$ from the explicit form of the scalar $R$. In fact, we can note that $(A-iB) \neq 0$ and $J \equiv \sqrt{J^2} \neq 0$. Then, remembering that $J^2 = (A-iB)(A+iB)$, we also conclude that $(A+iB) \neq 0$. Now, let us represent a RIM-spinor as $\Psi^H = (\Psi_{11} \;\; \Psi_{12} \;\; \Psi_{21} \;\; \Psi_{22})^T$. Let us define
 
 \begin{eqnarray}
 \label{A1} A_1 \equiv \Psi_{21}^* \Psi_{11} + \Psi_{22}^* \Psi_{12},\\
 \label{A2} A_2 \equiv \Psi_{11}^* \Psi_{21} + \Psi_{12}^* \Psi_{22},
 \end{eqnarray}
with $r^*$ denoting the complex conjugate of $r \in \mathbb{C}$. Then it is straightforward to see that
 
 \begin{eqnarray}
  A & = & A_1 + A_2,\\
  B & = & i (-A_1 + A_2).
 \end{eqnarray}

 With this in hands, since $A, B \neq 0$, we conclude that $A_1 \neq \pm A_2$. Besides, $A + iB = 2A_1 \neq 0 \Rightarrow A_1 \neq 0$, and $A - iB = 2A_2 \neq 0 \Rightarrow A_2 \neq 0$.
 
 These conditions will turn into strong constraints on RIM-decomposable bilinear covariants.
 
 \subsection{RIM-decomposable spinors and bilinear covariants}

 Let $\psi = \psi_L + \psi_R$ be a spinor that can be decomposed in terms of RIM-spinors, so there exists a matrix $R \in \mathfrak{M}$ such that $\psi = R \Psi^H$. In this case, we can write $R = \text{diag}(r_1,r_1,r_2,r_2)$ with decomposition $\psi = r_1 \Psi^H_L + r_2 \Psi^H_R$. Suppose that its dual is constructed in the Dirac fashion $\bar{\psi} = \psi^\dagger \gamma^0$. Let us represent the bilinear covariants associated to $\psi$ as $A_\psi, B_\psi, \textbf{J}_\psi, \textbf{K}_\psi, \textbf{S}_\psi$.

 We want to categorize all RIM-decomposable spinors in the Lounesto classification. In order to do that, initially we need to know the conditions for $\textbf{J}_\psi \neq 0$, because it is an imposition in the aforementioned classification. Since $\psi = R \Psi^H$, we can write

\begin{equation}
 J_\psi^\mu = (\Psi^H)^\dagger R^\dagger \gamma^0 \gamma^\mu R \Psi^H.
\end{equation}

 Representing $\Psi^H = (\Psi_1 \; \Psi_2)^T$ with $\Psi_j \equiv (\Psi_{j1} \; \Psi_{j2})^T$ for $j \in \{1,2\}$, we obtain

\begin{eqnarray}
 \label{J0} J_\psi^0 & = & |\Psi_1|^2 |r_1|^2 + |\Psi_2|^2 |r_2|^2,\\
 \label{J1} J_\psi^1 & = & -|r_1|^2 (\Psi_{12}^* \Psi_{11} + \Psi_{11}^* \Psi_{12} ) + |r_2|^2 (\Psi_{22}^* \Psi_{21} + \Psi_{21}^* \Psi_{22} ),\\
 \label{J2} J_\psi^2 & = & i \left[ -|r_1|^2 (\Psi_{12}^* \Psi_{11} - \Psi_{11}^* \Psi_{12} ) + |r_2|^2 (\Psi_{22}^* \Psi_{21} - \Psi_{21}^* \Psi_{22} ) \right],\\
 \label{J3} J_\psi^3 & = & -|r_1|^2 (|\Psi_{11}|^2 - |\Psi_{12}|^2) + |r_2|^2 (|\Psi_{21}|^2 - |\Psi_{22}|^2).
\end{eqnarray}
 One has to look for the conditions that lead to $J_\psi^\mu = 0, \; \forall \mu \in \{0,1,2,3\}$, simultaneously. These conditions will form the exactly conditions that we have to avoid. We have to verify the components $J_\psi^\mu$ one by one. Thus, as a start, in order to reach $J_\psi^0 = 0$, one finds three options:

\begin{itemize}
 \item[($i$)] $r_1 \neq 0$, $r_2 = 0$ and $|\Psi_1|^2 = 0$ (which, by symmetry, is equivalent to $r_2 \neq 0$, $r_1 = 0$ and $|\Psi_2|^2 = 0$).
 \item[($ii$)] $|\Psi_1|^2 = 0 = |\Psi_2|^2$.
 \item[$(iii)$] $r_1 = 0 =r_2$.
\end{itemize}

Obviously, $r_1 = 0 =r_2$ is not an allowed option, as it leads to $\psi = 0$ with all bilinear covariants vanishing, which is not interesting. Note that the option $(ii)$ leads to $\Psi^H = 0 = \psi$, then we descart it. Now we have to analyse the case of option $(i)$. In fact, one can easily verifies that condition $(i)$ simultaneously vanishes Equations (\ref{J0}-\ref{J3}), i.e.,

\begin{equation}\label{condJi1}
 \textbf{J}_\psi = 0 \Leftrightarrow (i).
\end{equation}
Then, we conclude that we have to avoid condition $(i)$.

 The scalar $A_\psi = \bar{\psi} \psi$ and the pseudo-scalar $B_\psi = i \bar{\psi} \gamma^5 \psi$ can both be written in terms of the four components of $\Psi^H$, as

\begin{eqnarray}
 \label{Adirac} A_\psi & = & (r_1 r_2^*) (\Psi_{21}^* \Psi_{11} + \Psi_{22}^* \Psi_{12}) + (r_1^* r_2) (\Psi_{11}^* \Psi_{21} + \Psi_{12}^* \Psi_{22}),\\
 \label{Bdirac} B_\psi & = & i \left[ -(r_1 r_2^*) (\Psi_{21}^* \Psi_{11} + \Psi_{22}^* \Psi_{12}) + (r_1^* r_2) (\Psi_{11}^* \Psi_{21} + \Psi_{12}^* \psi_{22}) \right].
\end{eqnarray}
For the particular case of $r_1, r_2 \in \mathbb{R}$ (in other words, if $R$ is real), we have the interesting fact $A_\psi = ( r_1 r_2 ) A$ and $B_\psi = ( r_1 r_2 ) B$, i.e., $A_\psi \propto A$ and $B_\psi \propto B$, and we have that $\psi = r_1 \Psi^H_L + r_2 \Psi^H_R$ is always a type-1 spinor when $r_1, r_2 \in \mathbb{R} - \{0\}$.

 In order to have $\psi$ a RIM-decomposable spinor, we have two options\footnote{Remember that it is equivalent to $r_2 \neq 0$ and $r_1 = 0$.}: $r_1, r_2 \neq 0$ or $r_1 \neq 0$, $r_2 = 0$ ($r_2 \neq 0$, $r_1 = 0$). Now, note that the second option cannot happen with $|\Psi_1|^2 = 0$ ($|\Psi_2|^2 = 0$) occuring, since it would lead to condition $(i)$. Then, for the sake of clarity, we will separate our study in two cases. On the case $r_1 \neq 0$, $r_2 = 0$, we will show that
 
\begin{lemma}\label{JKS}
 For a RIM-decomposable spinor $\psi$ such that $\bar{\psi} = \psi^\dagger \gamma^0$, we have

\begin{equation}
 \textbf{J}_\psi \neq 0 \Rightarrow \left( \textbf{K}_\psi \neq 0 \;\text{and}\; \textbf{S}_\psi = 0 \right),
\end{equation}
everytime the conditions $r_1 \neq 0$ and $r_2 = 0$ (or, equivalently, $r_2 \neq 0$ and $r_1 = 0$) are satisfied.
\end{lemma}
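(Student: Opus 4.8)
The plan is to use the fact that the hypothesis $r_2 = 0$ collapses $\psi = r_1\Psi^H_L$ to a definite-chirality (Weyl-type) spinor, for which $\gamma^5$ acts as a scalar, and then to feed all five bilinear covariants through the chirality projector. Since $\Psi^H_L = \tfrac12(\mathbbm{1}+\gamma^5)\Psi^H$ we have $\gamma^5\psi = \psi$; taking the Dirac adjoint $\bar\psi = \psi^\dagger\gamma^0$ and using $\gamma^{5\dagger} = \gamma^5$ together with $\{\gamma^0,\gamma^5\} = 0$ gives the single identity $\bar\psi\gamma^5 = -\bar\psi$, equivalently $\bar\psi\,\tfrac12(\mathbbm{1}+\gamma^5) = 0$, which carries the whole argument. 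Before that, I would record that the standing hypothesis $\mathbf{J}_\psi \neq 0$ is compatible with $r_2 = 0$: by \eqref{condJi1} it merely discards the degenerate case $(i)$, so $\Psi_1 \neq 0$ and $\psi \neq 0$.

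Then the computation is immediate. For the scalars, $A_\psi = \bar\psi\psi = \big[\bar\psi\,\tfrac12(\mathbbm{1}+\gamma^5)\big]\psi = 0$ and $B_\psi = i\bar\psi\gamma^5\psi = i(\bar\psi\gamma^5)\psi = -iA_\psi = 0$, in agreement with what \eqref{Adirac}--\eqref{Bdirac} give directly at $r_2 = 0$. For the bivector, $\gamma_{\mu\nu}$ commutes with $\gamma^5$ (it is a product of two Dirac matrices), so $S_{\mu\nu} = \tfrac{i}{2}\bar\psi\gamma_{\mu\nu}\psi = \tfrac{i}{2}\big[\bar\psi\,\tfrac12(\mathbbm{1}+\gamma^5)\big]\gamma_{\mu\nu}\psi = 0$ for all $\mu,\nu$, hence $\mathbf{S}_\psi = 0$. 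For the axial vector, $\gamma^5$ anticommutes with $\gamma_\mu$, so $K_\mu = \bar\psi\,i\gamma_{0123}\gamma_\mu\psi$ reduces (up to the convention-fixed identification of $\gamma_{0123}$ with the chirality operator appearing in the projectors) to $(\bar\psi\gamma^5)\gamma_\mu\psi = -\bar\psi\gamma_\mu\psi = -J_\mu$, i.e. $\mathbf{K}_\psi = -\mathbf{J}_\psi$; in particular $\mathbf{J}_\psi \neq 0 \Rightarrow \mathbf{K}_\psi \neq 0$. Together these place $\psi$ in class 6 of the Lounesto list and, in particular, prove the stated implication. The mirror case $r_1 = 0$, $r_2 \neq 0$ is handled verbatim after swapping $\tfrac12(\mathbbm{1}+\gamma^5)$ for $\tfrac12(\mathbbm{1}-\gamma^5)$, which flips the sign to $\mathbf{K}_\psi = +\mathbf{J}_\psi$ without changing any conclusion.

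A purely component-level derivation, more in the style of \eqref{J0}--\eqref{J3}, is also available: inserting $\psi = (r_1\Psi_1,0)^T$ into the explicit matrix forms of $K^\mu$ and $S^{\mu\nu}$ and expanding in $\Psi_{11},\Psi_{12}$ one checks term by term that $S^{\mu\nu}\equiv 0$ while $K^\mu = -J^\mu$, with $J^\mu$ as in \eqref{J0}--\eqref{J3} restricted to $r_2 = 0$; I would keep this only as a cross-check. The one point needing care — and the closest thing to an obstacle — is the bookkeeping of the factors of $i$ and of $\gamma^5$ versus $\gamma_{0123}$ in \eqref{Bzao} and \eqref{Szao}, so that the signs in $\bar\psi\gamma^5 = -\bar\psi$ and in $\mathbf{K}_\psi = -\mathbf{J}_\psi$ come out consistently; the vanishing of $A_\psi$, $B_\psi$ and $\mathbf{S}_\psi$ and the non-vanishing of $\mathbf{K}_\psi$ are robust against any such convention choice.
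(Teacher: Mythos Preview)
Your proof is correct and takes a genuinely different route from the paper. The paper argues by brute-force component expansion: it writes $K_\psi^0 = |\Psi_1|^2|r_1|^2 - |\Psi_2|^2|r_2|^2$, observes that at $r_2=0$ this can vanish only under condition $(i)$, and then lists all six $S_\psi^{\mu\nu}$ explicitly to exhibit the common $r_1r_2^*$ or $r_1^*r_2$ prefactor that kills them. You instead exploit the single algebraic fact that $r_2=0$ forces $\psi$ to be a $\gamma^5$-eigenvector, deduce $\bar\psi\,\tfrac12(\mathbbm{1}+\gamma^5)=0$, and push every bilinear through the commutation or anticommutation of the relevant Clifford element with $\gamma^5$. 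Your argument is shorter, representation-independent, and yields the sharper statement $K_\psi^\mu \propto J_\psi^\mu$ (not merely $K_\psi^0\neq 0$), which makes the type-6 conclusion more transparent. The paper's approach, on the other hand, produces explicit component formulae \eqref{S01}--\eqref{S23} that it reuses later in the regular-spinor discussion. Your caveat about the $i$ and the $\gamma_{0123}$ versus $\gamma^5$ bookkeeping is well placed: the paper's own conventions are not fully self-consistent on this point, but as you note the vanishing of $\mathbf{S}_\psi$ and the proportionality of $\mathbf{K}_\psi$ to $\mathbf{J}_\psi$ survive any choice.
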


\begin{proof}
 We will look for the conditions to make $\textbf{K}_\psi = 0$ and $\textbf{S}_\psi = 0$ in this case. Firstly, let us analyse $\textbf{K}_\psi$. Analogously to what was made to reach Equations (\ref{J0}-\ref{J3}), we obtain

\begin{equation}\label{K0}
 K_\psi^0 = |\Psi_1|^2 |r_1|^2 - |\Psi_2|^2 |r_2|^2.
\end{equation}

 Now, suppose $r_2 = 0$ (then, $r_1 \neq 0$). Thus, in order to have $K_\psi^0 = |\Psi_1|^2 |r_1|^2 = 0$, one must have $|\Psi_1|^2 = 0$. But it would lead to the condition $(i)$. Therefore, by relation (\ref{condJi1}), in this case we cannot have $\textbf{K}_\psi = 0$.




Now, let us analyse $\textbf{S}_\psi$. In the same fashion, we can write

\begin{eqnarray}
 \label{S01} S_\psi^{01} & = & -i \left[ (r_2^* r_1) (\Psi_{22}^* \Psi_{11} + \Psi_{21}^* \Psi_{12}) - (r_1^* r_2) (\Psi_{12}^* \Psi_{21} + \Psi_{11}^* \Psi_{22}) \right],\\
 \label{S02} S_\psi^{02} & = & (r_2^* r_1) (\Psi_{22}^* \Psi_{11} - \Psi_{21}^* \Psi_{12}) - (r_1^* r_2) (\Psi_{12}^* \Psi_{21} - \Psi_{11}^* \Psi_{22}),\\
 \label{S03} S_\psi^{03} & = & -i \left[ (r_2^* r_1) (\Psi_{21}^* \Psi_{11} - \Psi_{22}^* \Psi_{12}) - (r_1^* r_2) (-\Psi_{11}^* \Psi_{21} + \Psi_{12}^* \Psi_{22}) \right],\\
 \label{S12} S_\psi^{12} & = & (r_2^* r_1) (\Psi_{21}^* \Psi_{11} - \Psi_{22}^* \Psi_{12}) + (r_1^* r_2) (\Psi_{11}^* \Psi_{21} - \Psi_{12}^* \Psi_{22}),\\
 \label{S13} S_\psi^{13} & = & -i \left[ (r_2^* r_1) (\Psi_{22}^* \Psi_{11} - \Psi_{21}^* \Psi_{12}) + (r_1^* r_2) (\Psi_{12}^* \Psi_{21} - \Psi_{11}^* \Psi_{22}) \right],\\
 \label{S23} S_\psi^{23} & = & (r_2^* r_1) (\Psi_{22}^* \Psi_{11} + \Psi_{21}^* \Psi_{12}) + (r_1^* r_2) (\Psi_{12}^* \Psi_{21} + \Psi_{11}^* \Psi_{22}).
\end{eqnarray}

Again, without loss of generalization, suppose $r_2 = 0$ (and, so, $r_1 \neq 0$). In this case, it is obvious that we always have $S_\psi^{\mu \nu} = 0$, i.e., $\textbf{S}_\psi = 0$, and it does not depend on any condition for the components $\Psi_{ij}$ whatsoever.

 Summarizing, what we have found is that $\textbf{J}_\psi \neq 0 \Rightarrow \textbf{K}_\psi \neq 0$ and $\textbf{J}_\psi \neq 0 \Rightarrow \textbf{S}_\psi = 0$, when $r_1 \neq 0$ and $r_2 = 0$, which can be written as $\textbf{J}_\psi \neq 0 \Rightarrow \left( \textbf{K}_\psi \neq 0 \;\text{and}\; \textbf{S}_\psi = 0 \right)$. This ends the proof.
\end{proof}

 Now, note that, if $r_1 \neq 0$ and $r_2 = 0$, then $A_\psi = 0 = B_\psi$: in other words, in this case we are dealing necessarily with singular spinors. But, using \textit{Lemma} \ref{JKS}, we have that in order to classify these spinors on the spinor-plane by the Lounesto classification, if $r_1 \neq 0$ and $r_2 = 0$, then we are dealing with type-6 singular spinors (see Subsection \ref{subLounesto}).

 What about other situations that could lead to singular RIM-decomposable spinors? Let us see. In fact, the other option left is to have $r_1, r_2 \neq 0$. Note that, looking at the definitions (\ref{A1}) and (\ref{A2}) and Equations (\ref{Adirac}) and (\ref{Bdirac}), it is straightforward to see that we can write
 
\begin{eqnarray}
 A_\psi & = & (r_1 r_2^*) A_1 + (r_1^* r_2) A_2,\\
 B_\psi & = & i \left[ -(r_1 r_2^*) A_1 + (r_1^* r_2) A_2 \right].
\end{eqnarray}
Then, as a last try, if we choose $r_1, r_2 \neq 0$, we see that we cannot have simultaneously $A_\psi = 0 = B_\psi$, thus $\psi$ is a regular spinor, and $\textbf{K}_\psi, \textbf{S}_\psi \neq 0$. Indeed, if we impose $A_\psi = 0 = B_\psi$ with $r_1, r_2 \neq 0$, then $(r_1 r_2^*) A_1 = -(r_1^* r_2) A_2$ and $(r_1 r_2^*) A_1 = +(r_1^* r_2) A_2$, which has no solution once we know that $A_1, A_2 \neq 0$. Note that this implies that we cannot have type-4 and type-5 singular RIM-decomposable spinors at all, since this exhausts all possibilities for $r_1$ and $r_2$ in the construction of a non-null $\psi$. Moreover, we have shown here that a decomposition leading to a singular spinor needs to satisfy $r_1 \neq 0$ and $r_2 = 0$, and having $r_1 \neq 0$ and $r_2 = 0$ is sufficient in order to have a decomposition leading to a singular spinor. Therefore, we can state the following:

\begin{proposition}\label{bilinearDirac1}
 Suppose $\psi = R \Psi^H$, with $R = \text{diag}(r_1,r_1,r_2,r_2) \in \mathfrak{M}$ such that $\psi = r_1 \Psi^H_L + r_2 \Psi^H_R$, satisfying $\textbf{J}_\psi \neq 0$ and $\bar{\psi} = \psi^\dagger \gamma^0$. Then, the statements below are equivalent:
 \begin{itemize}
  \item[$(i)$] $\psi$ is a singular spinor.
  \item[$(ii)$] $\psi$ is a type-6 spinor.
  \item[$(iii)$] $r_1 = 0$ or $r_2 = 0$ (but not both).
  \item[$(iv)$] $\psi$ is projected only in $\left(0,\frac{1}{2}\right)$ or $\left(\frac{1}{2},0\right)$ representation, i.e., $\psi \propto \Psi^H_L$ or $\psi \propto \Psi^H_R$.
 \end{itemize}
\end{proposition}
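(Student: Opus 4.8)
The plan is to prove the four-way equivalence by establishing a short cycle of implications together with one essentially definitional bi-conditional, relying almost entirely on the computations already carried out in this subsection. First I would dispose of the degenerate configuration: since $\textbf{J}_\psi \neq 0$ is assumed, the equivalence (\ref{condJi1}) and the remarks following it forbid $r_1 = 0 = r_2$, so the only admissible cases are $r_1, r_2 \neq 0$ or exactly one of $r_1, r_2$ vanishing. This dichotomy is precisely what makes statement $(iii)$ meaningful, and it simultaneously yields $(iii) \Leftrightarrow (iv)$ for free: because $R = \text{diag}(r_1,r_1,r_2,r_2)$ acts as multiplication by $r_1$ on the left-handed (i.e. $\left(\frac{1}{2},0\right)$) block and by $r_2$ on the right-handed (i.e. $\left(0,\frac{1}{2}\right)$) block, the condition ``$r_2 = 0$'' is literally the statement ``$\psi = r_1 \Psi^H_L \propto \Psi^H_L$'', and symmetrically for $r_1 = 0$.

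Next I would prove $(iii) \Rightarrow (ii)$. Assume, without loss of generality, $r_1 \neq 0$ and $r_2 = 0$. From (\ref{Adirac}) and (\ref{Bdirac}) one reads off immediately $A_\psi = 0 = B_\psi$, so $\psi$ is singular. \emph{Lemma}~\ref{JKS} then gives $\textbf{K}_\psi \neq 0$ and $\textbf{S}_\psi = 0$; combined with the hypothesis $\textbf{J}_\psi \neq 0$, this is exactly the defining signature of a type-6 spinor in the Lounesto list of Subsection~\ref{subLounesto}. Hence $(ii)$ holds. The implication $(ii) \Rightarrow (i)$ is then immediate, since the type-6 class is one of the three singular classes by definition.

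It remains to close the cycle with $(i) \Rightarrow (iii)$, which I would argue by contraposition. Suppose $(iii)$ fails; by the dichotomy established at the outset this forces $r_1, r_2 \neq 0$. Writing $A_\psi = (r_1 r_2^*) A_1 + (r_1^* r_2) A_2$ and $B_\psi = i\left[ -(r_1 r_2^*) A_1 + (r_1^* r_2) A_2 \right]$ in terms of the quantities $A_1, A_2$ of (\ref{A1})--(\ref{A2}), and recalling that $A_1, A_2 \neq 0$ for a RIM-spinor (established right after \emph{Lemma}~\ref{RIMtype1}), the simultaneous vanishing $A_\psi = 0 = B_\psi$ would, upon adding and subtracting the two relations, force $(r_1^* r_2) A_2 = 0$ and $(r_1 r_2^*) A_1 = 0$, hence $A_1 = A_2 = 0$ — a contradiction. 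Therefore $\psi$ is regular, i.e. $(i)$ fails. Together with $(iii)\Leftrightarrow(iv)$ this gives the full equivalence.

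Since every ingredient has already been assembled in the preceding paragraphs, I expect no genuine technical obstacle here; the proposition mainly packages those results. The only point demanding a little care is the bookkeeping in $(iii) \Leftrightarrow (iv)$ — making explicit that the chirality projectors $\frac{1}{2}(\mathbbm{1}\pm\gamma^5)$ identify the left/right RIM-components with the $\left(\frac{1}{2},0\right)$ and $\left(0,\frac{1}{2}\right)$ Weyl sectors, so that ``only one of $r_1, r_2$ nonzero'' and ``$\psi$ projected in a single Weyl representation'' are the same condition stated in two languages.
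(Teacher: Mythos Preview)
Your proposal is correct and follows essentially the same approach as the paper: both arguments hinge on the dichotomy ``$r_1,r_2\neq 0$'' versus ``exactly one of $r_1,r_2$ vanishes'', invoke \emph{Lemma}~\ref{JKS} together with (\ref{Adirac})--(\ref{Bdirac}) to place the latter case in type~6, and use the $A_1,A_2\neq 0$ constraints to show the former case is regular. Your write-up is, if anything, more explicit about the logical cycle $(iii)\Rightarrow(ii)\Rightarrow(i)\Rightarrow(iii)$ than the paper's running-text argument, but the content is the same.
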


 We have realized, thus, that once we are setting $r_1, r_2 \neq 0$ we are dealing with regular spinors. In this case, we have other three options to verify: type-1 ($A_\psi, B_\psi \neq 0$,) type-2 ($A_\psi \neq 0$ and $B_\psi = 0$) and type-3 ($A_\psi = 0$ and $B_\psi \neq 0$). Since we know that $\Psi^H$ itself and the Dirac RIM-decomposable field $\Psi^D$ are both type-1 \cite{RIM}, we only have to check the other two possibilities. First, if we set $A_\psi = 0$, then $(r_1 r_2^*) A_1 = -(r_1^* r_2) A_2 \neq 0$, and we can write $B_\psi = 2i(r_1^* r_2) A_2 = -2i(r_1 r_2^*) A_1 \neq 0$, then it is possible to have type-2 RIM-decomposable spinors. Analogously, if we set $B_\psi = 0$, then $(r_1 r_2^*) A_1 = (r_1^* r_2) A_2 \neq 0$, and it leads to $A_\psi = 2i(r_1^* r_2) A_2 = 2i(r_1 r_2^*) A_1 \neq 0$, which means that it is also possible to have type-3 RIM-decomposable spinors. With this in hands, we state that

\begin{proposition}\label{bilinearDirac2}
 Suppose $\psi = R \Psi^H = r_1 \Psi^H_L + r_2 \Psi^H_R$, with $R = \text{diag}(r_1,r_1,r_2,r_2) \in \mathfrak{M}$, satisfying $\textbf{J}_\psi \neq 0$ and $\bar{\psi} = \psi^\dagger \gamma^0$. Then, $\psi$ is a regular spinor if, and only if, $r_1,r_2 \neq 0$. Yet, in this case, we have that:
 
\begin{itemize}
 \item[$(i)$] $\psi$ is a type-1 spinor if, and only if, $\displaystyle A \neq -iB \left( \frac{r_1 r_2^* \pm r_1^* r_2}{r_1 r_2^* \mp r_1^* r_2} \right)$.
 \item[$(ii)$] $\psi$ is a type-2 spinor if, and only if, $(r_1^* r_2)^2 \neq (r_1 r_2^*)^2$ and $\displaystyle A = -iB \left( \frac{r_1 r_2^* + r_1^* r_2}{r_1 r_2^* - r_1^* r_2} \right)$.
 \item[$(iii)$] $\psi$ is a type-3 spinor if, and only if, $(r_1^* r_2)^2 \neq (r_1 r_2^*)^2$ and $\displaystyle A = -iB \left( \frac{r_1 r_2^* - r_1^* r_2}{r_1 r_2^* + r_1^* r_2} \right)$.
\end{itemize}
\end{proposition}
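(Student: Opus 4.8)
The plan is to reduce the whole classification to the two scalar identities already obtained just above the statement,
\[
 A_\psi = (r_1 r_2^*)\,A_1 + (r_1^* r_2)\,A_2, \qquad
 B_\psi = i\left[ -(r_1 r_2^*)\,A_1 + (r_1^* r_2)\,A_2 \right],
\]
together with the facts, coming from \textit{Lemma} \ref{RIMtype1} and the discussion following it, that $\Psi^H$ is type-1 (hence $A,B\neq 0$) and that $A_1=\tfrac{1}{2}(A+iB)\neq 0$, $A_2=\tfrac{1}{2}(A-iB)\neq 0$. The regularity claim is then immediate: by \textit{Proposition} \ref{bilinearDirac1}, $\psi$ is singular precisely when exactly one of $r_1,r_2$ vanishes; since $\textbf{J}_\psi\neq 0$ forbids $r_1=r_2=0$, the spinor $\psi$ is regular if and only if $r_1,r_2\neq 0$, and --- as noted right before \textit{Proposition} \ref{bilinearDirac1} --- a regular RIM-decomposable spinor automatically has $\textbf{K}_\psi,\textbf{S}_\psi\neq 0$, so its Lounesto type is decided solely by the vanishing or not of $A_\psi$ and $B_\psi$.

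For item $(ii)$ I would impose $B_\psi=0$, i.e.\ $(r_1 r_2^*)A_1=(r_1^* r_2)A_2$, substitute $A_1=\tfrac{1}{2}(A+iB)$ and $A_2=\tfrac{1}{2}(A-iB)$, and rewrite the result as the linear relation $A\,(r_1 r_2^*-r_1^* r_2)=-iB\,(r_1 r_2^*+r_1^* r_2)$. When $r_1 r_2^*\neq\pm r_1^* r_2$ (equivalently $(r_1^* r_2)^2\neq (r_1 r_2^*)^2$) this is equivalent to $A=-iB\,\frac{r_1 r_2^*+r_1^* r_2}{r_1 r_2^*-r_1^* r_2}$, and in that situation $A_\psi=(r_1 r_2^*)A_1+(r_1^* r_2)A_2=2(r_1 r_2^*)A_1\neq 0$ follows automatically from $r_1\neq 0$ and $A_1\neq 0$, so $\psi$ is genuinely type-2. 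For the converse I would check that $B_\psi=0$ already forces $(r_1^* r_2)^2\neq (r_1 r_2^*)^2$: if $r_1 r_2^*=r_1^* r_2$ then $A_1=A_2$ and hence $B=0$, while if $r_1 r_2^*=-r_1^* r_2$ then $A_1=-A_2$ and hence $A=0$, both contradicting that $\Psi^H$ is type-1. Item $(iii)$ is treated by the mirror computation: $A_\psi=0$ gives $(r_1 r_2^*)A_1=-(r_1^* r_2)A_2$, i.e.\ $A\,(r_1 r_2^*+r_1^* r_2)=-iB\,(r_1 r_2^*-r_1^* r_2)$, equivalent under the same non-degeneracy hypothesis to $A=-iB\,\frac{r_1 r_2^*-r_1^* r_2}{r_1 r_2^*+r_1^* r_2}$, with $B_\psi=2i(r_1^* r_2)A_2\neq 0$ then automatic.

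Finally, for item $(i)$ I would argue that, within the regular case $r_1,r_2\neq 0$, being type-1 is exactly the negation of ``type-2 or type-3''. By the previous paragraph $B_\psi=0$ is equivalent to $A=-iB\,\frac{r_1 r_2^*+r_1^* r_2}{r_1 r_2^*-r_1^* r_2}$ (to be read as an impossible condition on the degenerate locus $r_1 r_2^*=r_1^* r_2$, where $B_\psi=0$ would force $B=0$), and $A_\psi=0$ is equivalent to $A=-iB\,\frac{r_1 r_2^*-r_1^* r_2}{r_1 r_2^*+r_1^* r_2}$ (likewise impossible when $r_1 r_2^*=-r_1^* r_2$, since then $A_\psi=0$ would force $A=0$). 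Hence $\psi$ fails to be type-1 if and only if $A$ coincides with one of those two values, which is precisely the condition $A\neq -iB\big(\frac{r_1 r_2^*\pm r_1^* r_2}{r_1 r_2^*\mp r_1^* r_2}\big)$ of the statement. The one point I expect to require the most care is the bookkeeping on the degenerate loci $(r_1 r_2^*)^2=(r_1^* r_2)^2$, where the displayed fractions are ill-defined: there one has to fall back on the raw linear relations and invoke the type-1-ness of $\Psi^H$ to conclude that $\psi$ is forced to be type-1, which is exactly what legitimizes the compact $\pm/\mp$ shorthand.
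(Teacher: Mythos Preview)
Your proposal is correct and follows essentially the same route as the paper: both reduce the classification to the formulas for $A_\psi$ and $B_\psi$ in terms of $A_1,A_2$ (equivalently $A,B$), invoke \textit{Proposition}~\ref{bilinearDirac1} for the regularity equivalence, impose $B_\psi=0$ (resp.\ $A_\psi=0$) to derive the linear relation giving item $(ii)$ (resp.\ $(iii)$), and obtain item $(i)$ as the complement. Your handling of the degenerate loci $(r_1 r_2^*)^2=(r_1^* r_2)^2$ via the type-1-ness of $\Psi^H$ is in fact slightly more explicit than the paper's, but the argument is the same in substance.
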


\begin{proof}
  Firstly, we have already seen that, in the hypothesis of this Proposition, the condition $r_1, r_2 \neq 0$ is necessary and sufficient for $\psi$ to be a regular spinor: in fact, this particular result can be understood as a corollary of Proposition \ref{bilinearDirac1}. Now, noticing that $A_1 = \frac{A+iB}{2}$ and $A_2 = \frac{A-iB}{2}$, one is able to write
 
\begin{eqnarray}
 \label{Apsi} A_\psi = \frac{1}{2} \left[ A(r_1 r_2^* + r_1^* r_2) + iB (r_1 r_2^* - r_1^* r_2) \right],\\
 \label{Bpsi} B_\psi = -\frac{i}{2} \left[ A(r_1 r_2^* - r_1^* r_2) + iB (r_1 r_2^* + r_1^* r_2) \right].
\end{eqnarray}
We know that $A_1, A_2 \neq 0$. Yet, one cannot reach $(r_1 r_2^* - r_1^* r_2) = 0 = (r_1 r_2^* + r_1^* r_2)$ with $r_1, r_2 \neq 0$: in fact, it would lead to $A_\psi = 0 = B_\psi$, an unattainable case here, as we have seen.

 Then, in order to reach $B_\psi = 0$, we need to have $A(r_1 r_2^* - r_1^* r_2) + iB (r_1 r_2^* + r_1^* r_2) = 0$. Moreover, one cannot have $(r_1 r_2^* + r_1^* r_2) = 0$ or $(r_1 r_2^* - r_1^* r_2) = 0$ isolated, because it will never make $B_\psi = 0$; in other words, $(r_1^* r_2)^2 \neq (r_1 r_2^*)^2$. Thus, the only option left is $A(r_1 r_2^* - r_1^* r_2) = -iB (r_1 r_2^* + r_1^* r_2)$, which implies that $\displaystyle A = -iB \left( \frac{r_1 r_2^* + r_1^* r_2}{r_1 r_2^* - r_1^* r_2} \right)$. In this case, it is garanteed that $A_\psi \neq 0$. This proves item $(ii)$.
 
 Now, if one wants to have $A_\psi = 0$, for analogue reasons as the case above, we have $(r_1^* r_2)^2 \neq (r_1 r_2^*)^2$ and $A(r_1 r_2^* - r_1^* r_2) = -iB (r_1 r_2^* + r_1^* r_2)$. It leads to $\displaystyle A = -iB \left( \frac{r_1 r_2^* + r_1^* r_2}{r_1 r_2^* - r_1^* r_2} \right)$ (with $B_\psi \neq 0$ garanteed), which proves item $(iii)$.
 
 So far, we have seen that the only way to vanish $A_\psi$ or $B_\psi$ without vanish both at the same time is to have $A(r_1 r_2^* \mp r_1^* r_2) = -iB (r_1 r_2^* \pm r_1^* r_2)$. Then, we conclude that we cannot have these conditions valid in order to keep both $A_\psi, B_\psi \neq 0$, i.e., having $\displaystyle A \neq -iB \frac{(r_1 r_2^* \pm r_1^* r_2)}{(r_1 r_2^* \mp r_1^* r_2)}$ is equivalent to say that $\psi$ can only be type-1, proving item $(i)$.
 
\end{proof}

 Indeed, $\Psi^H$ has $r_1 = r_2 = 1$, and Proposition \ref{bilinearDirac2} trivially confirms that $\Psi^H$ is type-1, with $A_{\Psi^H} = A$ and $B_{\Psi^H} = B$ being easily obtained by Equations (\ref{Apsi}) and (\ref{Bpsi}), as expected. As another example, for the Dirac spinor $\Psi^D$, we have $r_1 = \alpha \beta \delta$ and $r_2 = \alpha \beta \delta^{-1}$ as defined in Section \ref{spinorplane}, and one can verify that $\frac{r_1+r_2}{r_1-r_2} = -i\left(\frac{A}{B}\right) \Rightarrow -iB\left( \frac{r_1+r_2}{r_1-r_2} \right) = -A \neq A$, and $\frac{r_1-r_2}{r_1+r_2} = i\left(\frac{B}{A}\right) \Rightarrow -iB\left( \frac{r_1-r_2}{r_1+r_2} \right) = \frac{B^2}{A} \neq A$, which confirms that $\Psi^D$ is also type-1.

Summarizing the results of this Section, Propositions \ref{bilinearDirac1} and \ref{bilinearDirac2} provide an easy method to separate all spinors $\psi$ allowed in the spinor-plane\footnote{With dual defined as $\bar{\psi} = \psi^\dagger \gamma^0$.} in the Lounesto classification by just looking at their coefficients $(r_1, r_2)_\mathcal{B}$ in the ``RIM-copy" $\Pi^H$: if both coefficients are non vanishing, then the spinor is regular (with an easy way to verify if it is type-1, type-2 or type-3: simply divide the sum of the coefficients by the difference - and the difference by the sum - and multiply by $-iB$ in order to verify if it is equal to $A$), while if one (and only one) of the coefficients is zero then it is a singular type-6 spinor, with no need to the often hard work of construction of all the bilinear covariants. As we cannot have $r_1 = 0 = r_2$, all feasible cases are contemplated.

\section{Final Remarks}

 The second main result of the reference \cite{RIM}, concerning on exotic spinor fields, allows us to state that all spinors belonging to the spinor-plane that has a dynamic equation are not exotic spinors, i.e., the underlying topology of the space-time $M$ of which these spinors may emerge is trivial, in the sense that it has a trivial fundamental group $\pi_1(M) = 0$. In particular, this spinor-plane accomodates a bijective linear map between special classes (i.e., both being RIM-decomposable and, therefore, non exotic) of MDO and Dirac spinors. This mapping is quite natural, as it uses RIM-spinors as a fundamental element making the mediation between Dirac and MDO fields. Although this mapping has some constraints imposed in the fields themselves (they had to be RIM-decomposable), one does not have to work with the bilinear covariants, which is often a hard situation to deal with when we study MDO spinors, since they do not necessarily fit in the usual Lounesto classification. Therefore, the mapping developed here transcends the problem of Lounesto classification of MDO spinors.

 Among the outcomes of this work, we emphasize that Theorem \ref{family} is an exhaustive result: it gives not only the possibility to write down explicitly all possible spinors that can be decomposed by RIM-spinors, by giving the left- and right-handed components of each of them, but it also makes explicit an equivalence relation (via the homotopy $H$) between all the functions that represent spinor-plane coordinates. On the other hand, the Theorem \ref{homotopyspinors} is another robust result, providing a way to deform spinors in the spinor-plane, enabling the composition and the eventual classification of equivalence classes of homotopic spinors via the homotopies $G_\mathcal{A}$.

 The two Theorems are related, in the sense that both treat the subject of writing down explicit forms of representing the spinors that can be written in terms of RIM-spinors, by showing a homotopic equivalence relation. The main difference between them, which indeed complements each other, lies on the fact that Theorem \ref{family} provides a method to obtain the left- and right- handed components of the spinors by constructing basis for the spinor-plane, while Theorem \ref{homotopyspinors} supplies a way to obtain points (the spinors themselves) in a given fixed basis. In other words, while in Theorem \ref{family} we are continously deforming the spinor-plane itself (obtaining new basis for the space), in Theorem \ref{homotopyspinors} we are continously deforming the points of the plane in a fixed basis (obtaining intermediate spinors between two fixed ones).

 The understanding of the very nature of spinors is a field of study under development, which is as significant in Physics as in Mathematics. Theorems \ref{family} and \ref{homotopyspinors} may be the beginning of a new way to look at the construction of spinors, opening the possibility to the discovery of interesting relations via homotopy theory, which is perhaps one of the most important ideas behind algebraic topology.

 Propositions \ref{bilinearDirac1} and \ref{bilinearDirac2} facilitate the categorization of RIM-decomposable spinors $\psi$, that has $\bar{\psi} = \psi^\dagger \gamma^0$, in the Lounesto classification: they provide a complete and easy way to determine how these spinors are classified in the Lounesto classification when their dual is defined in the Dirac fashion. In fact, they connect the coefficients of their decomposition (or, in other words, their coordinates on the spinor-plane given in the basis $\mathcal{B} = \{ \Psi^H_L, \Psi^H_R \}$) directly with the Lounesto classification, avoiding the construction of all bilinear covariants and the often laborious process of check which of them are null and which ones are not. In particular, Proposition \ref{bilinearDirac2} is a generalization of the \textit{Lemma} $1$ in reference \cite{RIM}, which states that every Dirac spinor decomposable in terms of RIM-spinors is type-1 in the Lounesto classification. 

 It is worthwhile to make clear that the core of all results of this work is in the RIM-decomposition itself, in the sense that the major element that links all \textit{Lemmas}\footnote{The only one which is not related directly to the RIM-decomposition is \textit{Lemma} \ref{RIMtype1}, but it is about the RIM-spinor itself.}, Propositions and Theorems presented here is the pair of coefficients of the decomposition (or, in other words, the coordinates in the spinor-plane) of a given spinor in terms of RIM-spinors. Following this idea, one can study spinor properties in a very similar way if a given spinor is decomposable in terms of another. Thus, this work provides a working protocol that can be useful in other cases of the theory of spinors field of study.
 
 With regard to direct physical applications, this work provides a homotopical method of construction of any possible spinor field allowed in the Spinor Theory of Gravity (STG), which is a theory of gravitation built via a class of solutions of the linearized Einstein equations of General Relativity constructed from RIM-spinors \cite{novello2, novellojcap}, i.e., a gravitation theory with RIM-spinors playing a fundamental role. Moreover, since bilinear covariants are associated with physical observables, we developed a way to easy verify the possible couplings of a particle associated to a given spinor in this theory, by means of their coefficients in the RIM-decomposition.
 
 On what concerns the bijective linear map between Dirac and MDO spinors, one can think of its usefulness directly related to the task of understanding dark matter, which can be described by MDO fields \cite{dark}. Once dark matter interacts very weakly with Standard Model (SM) particles, and aspects of Dirac fields are known in the SM context (in particular the subset of RIM-decomposable Dirac spinors treated here), one can work with this Dirac-MDO mapping in further investigations on extending SM incorporating MDO spinors.

 Further results concerning questions about more properties related to the homotopies in the spinor-plane are under investigation. Moreover, the behaviour of MDO spinors and their bilinear covariants in this space is also a topic under study.


\section{Acknowledgement}
The authors are grateful to Professor Julio Marny Hoff da Silva for useful conversation. DB thank for CAPES for the financial support, RJBR thank to CAPES and CNPq (Grant Number 155675/2018-4) for the financial support. CHCV thanks to CNPq (Grant No 300381/2018-2) for the financial support.

\appendix
\section{Mass-dimension-one fields and the Fierz-Pauli-Kofink Identities}
As it can be seen in \cite{bilineares}, it is possible to build the basis vectors for the mass-dimension-one spinor's case using the usual Clifford algebra. For any element $\Gamma$ belonging to such algebra, the FPK relation reads
\begin{eqnarray}
(\stackrel{\neg}{\lambda_{h}} \Gamma \gamma_{\mu}\lambda_{h})=(\stackrel{\neg}{\lambda_{h}} \Gamma \lambda_{h})\lambda_{h}-(\stackrel{\neg}{\lambda_{h}} \Gamma\gamma_{5}\lambda_{h})\gamma_{5}\lambda_{h},
\end{eqnarray}
where $\Gamma \in \{ \mathbbm{1}, \gamma_{5}, \gamma_{\mu}, \Xi\gamma_{5}\gamma_{\mu}\Xi \}$. From the above relation we obtain the following:
\begin{eqnarray}
J^2=A^{2}+B^{2},
\end{eqnarray} 
and we also have
\begin{eqnarray}
(\stackrel{\neg}{\lambda_{h}}\Xi\gamma_{5}\gamma_{\mu}\Xi\lambda_{h})\gamma^{\mu}\lambda_{h}&=&(\stackrel{\neg}{\lambda_{h}}\Xi\gamma_{5}\Xi\lambda_{h})\lambda_{h}-(\stackrel{\neg}{\lambda_{h}}\Xi^2\lambda_{h})\gamma_{5}\lambda_{h},\nonumber\\
&=&(\stackrel{\neg}{\lambda_{h}}\gamma_{5}\lambda_{h})\lambda_{h}-(\stackrel{\neg}{\lambda_{h}}\lambda_{h})\gamma_{5}\lambda_{h}.\label{L}
\end{eqnarray}
Note that
\begin{eqnarray*}
[\Xi,\gamma_{5}]=0, \quad \{\gamma_{\mu},\gamma_{5}\}=0 \;\;\;\mbox{and} \quad \Xi^2=\mathbbm{1},
\end{eqnarray*}
with such relations at hands, one is able to write
\begin{eqnarray}
(\stackrel{\neg}{\lambda_{h}}\Xi\gamma_{5}\gamma_{\mu}\Xi\lambda_{h})\gamma^{\mu}\gamma^{5}\lambda_{h}&=&-(\stackrel{\neg}{\lambda_{h}}\Xi\gamma_{5}\gamma_{\mu}\Xi\gamma_{5}\lambda_{h})\gamma^{\mu}\lambda_{h},\nonumber\\
&=&(\stackrel{\neg}{\lambda_{h}}\Xi\gamma_{\mu}\Xi\lambda_{h})\gamma^{\mu}\lambda_{h}.\label{LJ}
\end{eqnarray}

Finally using relations \eqref{L} and \eqref{LJ}, we obtain
\begin{eqnarray}
\label{A5}J_{\mu}\gamma^{\mu}\lambda_{L}&=&(A-iB)\lambda_{R},\\
\label{A6}J_{\mu}\gamma^{\mu}\lambda_{R}&=&(A+iB)\lambda_{L}, \\
\label{A7}K_{\mu}\gamma^{\mu}\lambda_{L}&=&-(A-iB)\lambda_{R},\\
\label{A8}K_{\mu}\gamma^{\mu}\lambda_{R}&=&(A+iB)\lambda_{L}.
\end{eqnarray}

\section{Some comments on the linear and non-linear aspects of the spinor fields}
	We know that the Dirac equation is linear with respect to the spinor fields:
	\begin{equation}\label{equacaodedirac}
	 \left( i \gamma^\mu \partial_\mu - M \right) \Psi^D = 0,
	\end{equation}
	with $M$ being a mass parameter. The solutions $\Psi^D$ are called Dirac spinors, which, because of the linear aspect of the Dirac equation \eqref{equacaodedirac}, are said to be linear spinor fields.
	
	The non-linear counterpart, which we call the Heisenberg equation, is given by
	\begin{eqnarray}\label{equacaodeheisenberg}
	  [i\gamma^{\mu}\partial_{\mu}-2s(A+iB\gamma^{5})]\Psi^{H} = 0,
	\end{eqnarray}
	with $A := \bar{\Psi}^{H}{\Psi}^{H}$ and $B := i\bar{\Psi}^{H}\gamma^{5}{\Psi}^{H}$ being the bilinear covariants associated to the so-called Heisenberg spinor $\Psi^H$, and $s$ is a constant with dimension $(\text{length})^2$. The Heisenberg equation is non-linear with respect to the fields, as one can easily verify by rewriting it as
	\begin{equation}
	 i\gamma^{\mu}\partial_{\mu}\Psi^{H} - 2s \bar{\Psi}^{H}{\Psi}^{H} \Psi^{H} + 2s \bar{\Psi}^{H}\gamma^{5}{\Psi}^{H} \gamma^{5} \Psi^{H} = 0.
	\end{equation}
	Thus, Heisenberg spinors are general solutions of Equation (\ref{equacaodeheisenberg}), and in this sense we say that they are non-linear spinor fields.
	
	Let us compare some aspects of the linear Dirac dynamics and the non-linear Heisenberg dynamics. For the linear case, a particular solution of the Dirac equation (plane waves) can be written as
	\begin{equation}
	 \partial_\mu \Psi^D = i k_\mu \Psi^D.
	\end{equation}
	On the other hand, in the non-linear case of Heisenberg spinors $\Psi^H$, it is possible to find solutions \cite{novello} defined by the property
	\begin{eqnarray}\label{condicaodeinomata}
	  \partial_{\mu} \Psi^H = (aJ_{\mu}+bK_{\mu}\gamma^{5})\Psi^H,
	\end{eqnarray}
	with $a,b \in \mathbb{C}$ such that $2s = i(a-b)$. Yet, the integrability condition forces us to have $\textrm{Re}(a) = \textrm{Re}(b)$. Now, a spinor $\Psi^H$ that satisfies the condition (\ref{condicaodeinomata}) is called a RIM spinor, with RIM standing for Restricted Inomata-McKinley.
	
	The term ``restricted'' was created by reference \cite{RIM}, and the reason is the following: in the original decomposition \cite{akira} the first term of the rhs of Equation (\ref{condicaodeinomata}) is given by $K^\lambda\gamma_\lambda \gamma_\mu \gamma^5$. The mapping between this term and $J_\mu$ is not something straightforward. In fact, it is given by means of a matrix operator that we will call $G$. For the construction of this matrix, we will begin introducing the so-called Inomata-McKinley spinors, that are special Heisenberg spinors which satisfy
	\begin{equation}\label{InomataOriginal}
	 \partial_\mu \Psi = \underbrace{\frac{1}{2} \epsilon}_{\tilde{a}} \left( \bar{\Psi} \gamma^\lambda \gamma_5 \Psi \right) \gamma_\lambda \gamma_\mu \gamma_5 \Psi \underbrace{-2\epsilon}_{\tilde{b}} \left( \bar{\Psi} \gamma_\mu \gamma_5 \Psi \right) \gamma_5 \Psi.
	\end{equation}
	Then,
	\begin{eqnarray}
	 \partial_\mu \Psi & = & \underbrace{\tilde{a} (-i)}_{a} K^\lambda \gamma_\lambda \gamma_\mu \gamma_5 \Psi + \underbrace{\tilde{b}(-i)}_{b} K_\mu \gamma_5 \Psi \Rightarrow \nonumber \\
	 \partial_\mu \Psi & = & a K^\lambda \gamma_\lambda \gamma_\mu \gamma_5 \Psi + b K_\mu \gamma_5 \Psi.
	\end{eqnarray}
	Comparing with Equation \eqref{condicaodeinomata}, one wants that $K^\lambda \gamma_\lambda \gamma_\mu \gamma_5 \Psi := \varepsilon J_\mu \Psi$, implying that $K^\lambda \gamma_\lambda \gamma_\mu \gamma_5 = \varepsilon J_\mu \mathbbm{1} G$, for some operator $G$. Then, making $\varepsilon G \rightarrow G$, one has
	\begin{equation}
	 K^\alpha \gamma_\alpha \gamma_\mu \gamma_5 = J_\mu G.
	\end{equation}
	Multiplying by $J^\mu$ from the left, we obtain $J^\mu K^\alpha \gamma_\alpha \gamma_\mu \gamma_5 = J^2 G$, which leads us to
	\begin{equation}
	 G =\frac{1}{J^2} \left( J^\mu K^\alpha \gamma_\alpha \gamma_\mu \gamma_5 \right).
	\end{equation}
	But $\{ \gamma_\nu, \gamma_\lambda \} = \eta_{\nu \lambda}$, so we can write $\gamma_\alpha \gamma_\mu = \frac{1}{2} \left[ \gamma_\alpha, \gamma_\mu \right] + \eta_{\mu \alpha}$. Thus,
	\begin{eqnarray}
	 G = \frac{1}{J^2} \left( J^\mu K^\alpha \frac{1}{2}\left[ \gamma_\alpha, \gamma_\mu \right] + \cancelto{_0}{J \bullet K}\;\;\; \right) \gamma_5,
	\end{eqnarray}
	and finally we have the explicit form for the operator that restricts the Inomata-McKinley spinors to the specific case of RIM spinors:
	\begin{eqnarray}
	 G = \frac{1}{2J^2} J^\mu K^\alpha \left[ \gamma_\alpha, \gamma_\mu \right] \gamma_5.
	\end{eqnarray}
	So, RIM spinors are special cases of Inomata-McKinley spinors, and both are particular cases of Heisenberg spinors. All of them are non-linear fields, since their dynamics are conducted by the Heisenberg equation. In a pictorial representation way, we can summarize this as $\Psi^H_{\text{Heisenberg}} \supset \Psi^H_{\text{Inomata-McKinley}} \supset \Psi^H_{\text{RIM}}$.
	
\section{Some comments on the Elko mass-dimension-one spinor fields}	
	
In the early days of mass dimension one spinors, the theory was presented in such a way that a breaking Lorentz term took part in the spin sums. As a net result, the associated quantum field was non-local and there was a preferred axis of symmetry. After all, the theory was shown to be invariant under $SIM(2)$ and $HOM(2)$ transformations \cite{horv}, being then a typical theory carrying the Very Special Relativity symmetries \cite{cohen}. Quite recently, important advances on the spinor dual theory has opened the possibility of circumvent the Weinberg no-go theorem, proposing a spinor field of spin $1/2$ endowed with mass dimension one, local, neutral with respect to gauge interactions, and whose theory respects Lorentz symmetries \cite{1305,Ahluwa2,1602}. We should bring to the scene the canonical Wigner work on the irreducible representations of the Poincar\'e group \cite{Wigner1}. By Poincar\'e group, as usual, it is understood as the semi-simple extension of the orthochronous proper Lorentz group encompassing translations. By investigating the irreducible representations for this case, no particle as a fermion with canonical mass dimension one was found \cite{elko666}.  

Here the situation is different, however, when discrete symmetries are taken into account, i.e., when not only the orthochronous proper group is considered. This point was also analyzed by Wigner, in a less known paper \cite{Wigner2}. Interestingly enough, Wigner found a fermionic irreducible representations whose behaviour under $C, P$ and $T$\footnote{Where $C$, $P$ and $T$ stand for the charge conjugation, parity and time reversal operators, respectively.} are exactly what was expected for the bosonic fields, in other words, Wigner found (theoretically) a class of particles, to be more specifically  fermions, which are endowed with ``bosonical'' character. For concreteness, while conventional wisdom states that fermions belonging to the standard model (quarks and leptons) obey $T^2=-1$ ($(CPT)^2=-1$) and bosons $T^2=+1$ ($(CPT)^2=+1$), Wigner also has shown that, in the very realm of full Poincar\'e symmetries, it is also possible to have $T^2=+1$ for fermions (leading to $(CPT)^2=+1$). It turns out that the MDO field taken into account in this work performs a realization of the (indeed odd) aforementioned fermionic representation, from where we can adduce its ``bosonical'' character \cite{elko666}. 	

In this way, what we want to emphasize is that through the analysis of Wigner's works, all arguments corroborate with what we already know about the characteristics of the MDO fermion in question. Although we are dealing with a fermion, it does not have the same mass dimensionality as the Dirac fermions do, in addition to the fact that it has the quantum field propagator similar to that of the scalar field and respect only the Klein-Gordon equation. That is, the MDO are fermions carrying bosonical aspects.

Regarding its formal structure, the Elko spinors are defined as
\begin{equation}\label{elkospinor}
\lambda^{S/A}_{h}(\boldsymbol{p}) = \left(\begin{array}{c}
\pm i\Theta[\phi_L(\boldsymbol{p})]^* \\ 
\phi_L(\boldsymbol{p})
\end{array}\right),
\end{equation}
with $\Theta$ being the Wigner time-reversal operator, given by
\begin{equation}
 \Theta = \left(\begin{array}{cc}
                 0 & -1\\
                 1 & 0
                \end{array}
\right).
\end{equation}
Note that $\Theta[\phi_L(\boldsymbol{p})]^*$ and $\phi_L(\boldsymbol{p})$ are defined as right-hand and left-hand components (under Lorentz transformations), with the upper index $S/A$ standing for self-conjugated and anti-self-conjugated via charge conjugation operation 
\begin{equation}
\mathcal{C}\lambda^{S/A}_{h}=\pm\lambda^{S/A}_{h},
\end{equation}
while the lower index $h$ represents the helicity of each component.
The dual helicity feature is encoded on the relations 
\begin{eqnarray}\label{helicidadeleft}
\vec{\sigma}.\hat{p}\;\phi_L^{\pm}(\boldsymbol{p})=\pm\phi_L^{\pm}(\boldsymbol{p}),
\end{eqnarray}
and the other component has opposite helicity, i.e.,
\begin{equation}\label{helicidaderight}
\vec{\sigma}.\hat{p}\;\Theta[\phi_L^{\pm}(\boldsymbol{p})]^* = \mp \Theta[\phi_L^{\pm}(\boldsymbol{p})]^*.
\end{equation}
The helicity is simply flipped by the action of the Wigner time-reversal operator.

Referring to dynamics, such spinors do not fulfil Dirac dynamical equation \eqref{equacaodedirac}, due to the fact that their representation spaces are not linked by the parity symmetry, they are related by the Wigner time-reversal operator. This way, MDO spinors dynamics are governed only by the Klein-Gordon equation. The last statement is translated into the mass dimensionality of the refered spinors.

\end{document}